\documentclass{article}

\usepackage[british]{babel}
\usepackage{hyphenat}
\usepackage[section]{placeins}
\usepackage{amsmath}
\usepackage{graphicx}
\usepackage{algorithm}
\usepackage{algorithmicx}
\usepackage{algpseudocode}
\usepackage{subcaption}
\usepackage{float}   
\usepackage{caption}
\usepackage{multicol,lipsum}
\usepackage{mathtools}
\usepackage{cuted}
\usepackage{tcolorbox}
\usepackage{amsfonts}
\usepackage{amsthm}
\usepackage{amssymb}
\usepackage{booktabs}

\newtheorem*{claim*}{Claim}
\usepackage[colorinlistoftodos,prependcaption,textsize=small]{todonotes}

\makeatletter

\AtBeginDocument{
  \setlength\abovedisplayskip{0pt}
  \setlength\belowdisplayskip{0pt}}

\newenvironment{stbox}{
  \begin{tcolorbox}[colback=gray!5!white,colframe=gray!75!black]
}{
  \end{tcolorbox}
}
\newtcolorbox[auto counter, number within=subsection]{mybox}[2][]{
    title=\thetcbcounter: #2, label=box:#2,#1
}

\title{Inverse-Designed Dot Product Engine}
\author{Anannya Mathur\\
Department of SIT/CSE, IIT Delhi\\
\texttt{siy237565@iitd.ac.in} \quad \texttt{anannyamathur2000@gmail.com}
}
\date{} 

\usepackage[hidelinks]{hyperref}
\begin{document}

\maketitle

\begin{abstract}
The work presents an inverse-designed optical cavity that can direct light from two sources such that if the sources
were to represent any number in the range [-1,1] with magnitude encoded through the power emitted by the source and sign
by switching the direction of source current, the photocurrent generated at
the two output ports is proportional to the product of the two numbers. Let us say that the two sources encode x and y,
which are two numbers $\in$ [-1,1]. Multiplication is reduced to the form $(x+y)^2 - (x-y)^2 = 4xy \propto xy$. The addition and subtraction operations of the numbers are supported by constructive and destructive interference, respectively. The work shows that replacing the DDOT dot product engine of the Lightening Transformer with the optical cavity proposed to calculate 
the dot product can lead to a reduction in the area occupied by the photonic core by 88 \%, can reduce the power consumption by lasers by around 23.43 \%, and bring down energy consumption while training DeiT models by 0.88 \%. The cavities can generate photocurrents of the form $1.057 xy + 0.249$ with $R^2=0.88,$ thus showing a relationship of direct proportionality between the target product $xy$ and the output of the cavity in response to stimuli encoding $x$ and $y$. 

\end{abstract}

\maketitle

\section{Introduction}

Lightening Transformer  \cite{zhu2023lighteningtransformerdynamicallyoperatedopticallyinterconnectedphotonic}, is the
most recent state-of-the-art work that reports the lowest energy consumption and area amongst all existing
``unconventional'' accelerators. It proposes optical vector dot products. Their system uses Mach-Zehnder modulators as
wavelength demultiplexers, beam splitters and phase shifters. Using their open-source simulation package, the
simulation results for the DeiT-Tiny workload (image transformer) showed an energy consumption of 1.129 mJ while the
standard accelerators reported 36.82 mJ (precision = 8 bits, number of tiles = 8 and number of PEs per tile = 2) for
the same workload. Their
photonic accelerator could reduce energy consumption by 32.6$\times$. Their solution is not area-efficient.   

Although traditional beam splitters and switches are extremely efficient and show great promise, they have large area
footprints \cite{liu2019}. The shift to inverse-designed cavity designs is prompted by the need to develop high-density
devices that can promote the integration of tens of billions of components on a single photonic integrated circuit.
Therefore, optical cavities are a viable replacement for traditional Microring Resonators and MZMs with the
potential to support ultra-compact devices that can compute complex functions. The wave nature of light can be used to support multifunction computation beyond simple bending and wavelength demultiplexing, thus eliminating the need to integrate multiple components in an integrated circuit, which ultimately results in reduced power consumption, reduced chip area, and lower latency. One such application is the inverse designed cavity that accepts a waveform encoded as a mathematical function and outputs the integral of the function
\cite{camacho2021single} \cite{estakhri2019inverse}. They inverse-designed their cavities by constraining their design
space in the cm scale, but inverse designs have the potential to constrain the cavities in the nanometer scale, reducing their footprints in the micrometer scale. One such finding reported are cavity designs to miniaturize particle accelerators, where it was shown that an additional kick of around 0.9 kiloelectron volts (keV) can be given to
a bunch of 80-keV electrons along just 30 micrometers of a specially designed channel \cite{Sapra_2020}. They
implemented a waveguide-integrated dielectric laser accelerator, and therefore reduced the scale of such accelerators by $10^4$ using light-electron interaction. 

To the best of our knowledge, there is no prior work on inverse-designed dot product engines. Given that dot products
are the most basic units of computation in neural networks, exploiting the wave nature of light can provide substantial
gains in terms of power and performance. Through the application of wave geometry and interference, we aim to
demonstrate an optical cavity with dimensions in nanometers. This cavity will efficiently compute the dot product of two
numbers at the speed of light. Consequently, conventional optical components, such as phase shifters, directional
couplers, Mach Zehnder interferometer, and Y-branch, will be unnecessary. The biggest challenge in such work is that the
results (when converted to an electrical domain) can be approximate. This may introduce errors. This paper aims to study
the exact nature of errors and to achieve an equitable trade-off. 



\href{ https://github.com/stanfordnqp/spins-b}{SPINS-B} is an open source simulation package maintained by Stanford
University and can be used to inverse design optical cavities. The package supports inverse designs when only a single
source of light is present in the system (such as the package can design cavities to mimic beam splitters, wavelength
demultiplexers, bending of light, and so on). The paper hopes to lay down the theory and propose an extension to the
SPINS package to solve optical systems where multiple sources are present, and the cavities need to be reconstructed to
steer appropriate light-media interactions of the multiple source-sink pairs (the dot product happens to be a version of
one of these optical systems where two numbers represent two distinct sources of light).

\section{Background}

\subsection{Maxwell's Equations}
Let $\vec{B}$ be the magnetic induction field, $\vec{E}$ be the electric field, 
$\vec{D}$ be the displacement field (also known as electric flux), 
$\vec{H}$ be the magnetic field, $\rho$ be the free charge density (charge per unit volume), 
$\epsilon'(\vec{r},\omega) \left ( = \frac{\epsilon(\vec{r},\omega)}{\epsilon_0} \right )$ be the
relative permittivity or dielectric function where $\vec{r}$ is the position vector and $\omega$ is the frequency. 
$J$ is the current density (current per unit area). Maxwell's equations are as follows.

\begin{equation}
\label{eqn:divb}
    \nabla.\vec{B} =0
\end{equation}

According to Green's theorem,
Equation~\ref{eqn:divb} can be rewritten as $\iint_S(\vec{B}.\hat{n})dS=0$,
where $S$ is a closed
surface, $\hat{n}$ is the unit surface normal and $dS$ is an infinitesimal small piece of the surface.
This means that the net magnetic flux through a closed surface is zero. 

\begin{equation}
    \nabla \times \vec{E} + \frac{\partial \vec{B}}{\partial t}=0
    \label{eqn:ebr}
\end{equation}

Equation~\ref{eqn:ebr} is also known as the Maxwell-Faraday equation~\cite{2023Maxwell}. It can
be rewritten as $\oint_C \vec{E}.dl= -\iint_S (\frac{\partial \vec{B}}{\partial t}.\hat{n})ds$.
$C$ is a closed path, $S$ is the
surface bounded by $C$, $dl$ is an infinitesimally small vector element on the closed path $C$ and
$ds$ is an 
infinitesimally small surface element. The law signifies that the electric potential associated
with a closed path C is entirely due to the time-varying nature of the magnetic field.

\begin{equation}
    \nabla.\vec{D}= \rho
    \label{eqn:drho}
\end{equation}

Equation~\ref{eqn:drho} quantifies the amount of free charge at a point and equates it with the divergence ($\nabla.$) of the electric flux.

\begin{equation}
    \nabla \times \vec{H} = \frac{\partial \vec{D}}{\partial t} + \vec{J} 
    \label{eqn:hdj}
\end{equation}

Equation~\ref{eqn:hdj} signifies that even in the absence of a current source ($\vec{J}$), a changing electric field
($\vec{D}$) can induce a magnetic field. This is the Maxwell-Ampere law, which establishes the theoretical
foundations of electromagnetic (EM) waves.

\subsection{EM Waves}

Let us assume that the refractive index, absorption or transmission of the material remains constant and does not depend on the intensity, amplitude or frequency of the incoming light
wave \cite{photonic_crystal}. Let us consider anisotropic media
where the permittivity varies. In this case $\epsilon$ is a 
$3 \times 3$ matrix. 
The displacement field $\vec{D}$ can thus be represented as ${D}_i = \epsilon_0 \sum_j \epsilon_{ij}' E_j$, where $i$ and $j$ index the $x,y,z$ axes of the coordinate system, $D_i$ is one of the $x,y,z$ components of $\vec{D}$, and $\epsilon_0$ is the electrical permittivity of the free space. 

If we restrict ourselves to macroscopic and isotropic media, then
$D$ becomes

\begin{equation}
    \vec{D}(\vec{r})= \epsilon_o \epsilon'(\vec{r})\vec{E}(\vec{r})=  \epsilon(\vec{r}) \vec{E}(\vec{r})
    \label{eqn:de}
\end{equation}

The magnetic induction field $\vec{B}$ is related to the magnetic field $\vec{H}$ via $\vec{B}(\vec{r})=\mu_0 \mu(\vec{r}) \vec{H}(\vec{r})$. 

\begin{equation}
    \vec{B}(\vec{r})=\mu_0 \mu(\vec{r}) \vec{H}(\vec{r})
    \label{eqn:bhmu}
\end{equation}

Let us consider $\mu(\vec{r})= 1$ \cite{photonic_crystal} where $u(\vec{r})$ is the relative magnetic permeability. Therefore, Equation~\ref{eqn:bhmu} becomes

\begin{equation}
     \vec{B}(\vec{r})=\mu_0 \vec{H}(\vec{r})
     \label{eqn:bh}
\end{equation}

From equations \ref{eqn:divb} and \ref{eqn:bh}, the following relation can be derived.

\begin{equation}
\nabla.H(\vec{r},t)=0
    \label{eqn:divH}
\end{equation}

Equation~\ref{eqn:divH} shows that the magnetic fields are perpendicular to wave propagation. It is also known as the Tranversality Rule of Magnetic Fields. 

Using Equations \ref{eqn:ebr} and \ref{eqn:bh}, one arrives at

\begin{equation}
    \nabla\times \vec{E}(\vec{r},t) + \mu_0 
    \frac{\partial \vec{H}(\vec{r},t)}{\partial t}=0
    \label{eqn:eh_timedomain}
\end{equation}

Using Equations \ref{eqn:hdj} and \ref{eqn:de}, we arrive at
\begin{equation}
    \nabla\times \vec{H}(\vec{r},t)- \epsilon(\vec{r}) \frac{\partial \vec{E}(\vec{r},t)}{\partial t}= \vec{J}
    \label{eqn:hej_timedomain}
\end{equation}

Let us fit the wave function in these equations. Hence, we have the following relationships.  $\vec{H}(\vec{r},t)= \vec{H}(\vec{r})e^{-i\omega t}$ and $\vec{E}(\vec{r},t)= \vec{E}(\vec{r})e^{-i\omega t}$, where $\omega$ is the frequency of the light wave. 

Therefore, in the frequency domain, Equations \ref{eqn:eh_timedomain} and \ref{eqn:hej_timedomain} can be interpreted as follows.

\begin{equation}
    \nabla\times \vec{E}(\vec{r},t) - i\omega \mu_0 \vec{H}(\vec{r},t) =0
    \label{eqn:eh}
\end{equation}
\begin{equation}
     \nabla\times \vec{H}(\vec{r},t)+  i\omega\epsilon(\vec{r}) \vec{E}(\vec{r},t)= \vec{J}
     \label{eqn:hej}
\end{equation}

Equations \ref{eqn:eh} and \ref{eqn:hej} lead to the Principal Equation for EM waves.
Let us use Equation \ref{eqn:eh} to find $\vec{H}(\vec{r},t)$ in terms of $\vec{E}(\vec{r},t)$.

\begin{equation}
            \frac{1}{\mu_0 (i\omega)}\nabla\times \vec{E}(\vec{r},t) = \vec{H}(\vec{r},t) 
\end{equation}

Let us now use Equation~\ref{eqn:hej}.

\begin{equation}
    \frac{1}{\mu_0 i \omega} \nabla \times \nabla \times \vec{E}(\vec{r},t) + \epsilon(\vec{r}) (i\omega) \vec{E}(\vec{r},t) = \vec{J} 
    \label{eqn:EfromJ_derive}
\end{equation}

The principal equation can thus be derived.
\begin{equation}
    \left ( \left ( \nabla \times \frac{1}{\mu_0}\nabla \times \right ) - \omega^2\epsilon(\vec{r}) \right ) \vec{E}(\vec{r},t)  = i\omega \vec{J} 
    \label{eqn:efromj}
\end{equation}

In the sections that follow, we shall denote $\vec{E}(\vec{r},t)$ and $\vec{H}(\vec{r},t)$ by $\vec{E}$ and $\vec{H}$ respectively. 

\section{Realizing a Dot Product in the Photonic Domain}

\subsection{Wave Equation for Inhomogeneous Media} 

\begin{figure}[!htb]
\centering
\includegraphics[width=0.5\linewidth]{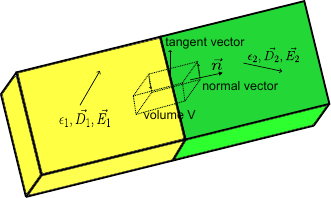}
\caption{Wave propagation at the medium boundary}
\label{fig:boundary_cond}
\end{figure}

Let us consider the case where there are no current sources and
there is no free charge present in the waveguide. 
This means that $J=0$ and $\rho=0$. 
Therefore, Equation \ref{eqn:drho} becomes $\nabla.\vec{D}=0$. This implies that
$\nabla.(\epsilon(\vec
{r})\vec{E}(\vec{r}))=0$ (from Equation \ref{eqn:de}).

Next, let us consider Figure~\ref
{fig:boundary_cond} that shows two different media juxtaposed with each other. Their permittivities are $\epsilon_1$ and $\epsilon_2$, respectively. 

Let us use the divergence theorem, $ \int_{V} \nabla.\vec{D} dV = \oint_S (\vec{D}.\vec{n}) dS, $ where $dS$ is the surface element of surface $S$, whose unit normal vector is $\vec{n}$, which bounds volume $V$. This implies $(\vec{D_2}.\vec{n} + \vec{D_1}.(-\vec{n}))S=0.$ Thus, the normal component of the displacement field should be continuous at the boundary between two dielectric media \cite{MaterialBoundaries}. 

Given that there is no current flow, Equation~\ref{eqn:efromj} becomes

\begin{equation}
    ((\nabla \times \frac{1}{\mu_0}\nabla \times)- \omega^2\epsilon(\vec{r}))\vec{E}=0
    \label{eqn:ewithoutj}
\end{equation}
$\because \vec{E}= \vec{E}(\vec{r}) e^{-i \omega t}, 
\therefore \frac{\partial^2 \vec{E}}{\partial t^2}= - \omega^2 \vec{E}(\vec{r})e^{-i \omega t} = -\omega^2 \vec{E}.$ Thus, the term $-\omega^2 \vec{E}$ can be replaced by $ \frac{\partial^2 \vec{E}}{\partial t^2}$ in Equation \ref{eqn:ewithoutj} if it needs to be shown in the time domain. 

Hence, in the time domain, Equation~\ref{eqn:ewithoutj} becomes the following.

\begin{equation}
    \left ( (\nabla \times \nabla \times)+ \mu_0\epsilon(\vec{r})\frac{\partial^2}{\partial t^2} \right ) \vec{E}=0
    \label{eqn:ewithoutj_timedomain}
\end{equation}
Let us denote the wavefunction of an optical wave by $u(\vec{r},t),$ where $\vec{r}$ is the position vector and $t$ is the time. Let $c_0= \frac{1}{\sqrt{\epsilon_0\mu_0}}=3 \times 10^8 m/s $ be the speed of light in free space. Optical waves satisfy the wave equation \cite{doi:https://doi.org/10.1002/0471213748.ch2}.

\begin{equation}
    \nabla^2u-\frac{1}{c_0^2}\frac{\partial^2u}{\partial t^2}=0
    \label{eqn:wave_eqn}
\end{equation}

Since the wave equation (Eqn. \ref{eqn:wave_eqn}) is linear, that is, if $u_1$ and $u_2$ are solutions of the wave equation, then $u= u_1+u_2$ is also a solution of the equation, the superposition principle holds for optical waves. 

Let us now fit Equation~\ref{eqn:ewithoutj_timedomain} in the form of a wave equation~\cite{doi:https://doi.org/10.1002/0471213748.ch5}. 

We shall use the following result from vector calculus, where $\vec{v}$ is a vector.
\begin{equation}
    \nabla \times (\nabla \times \vec{v})= \nabla(\nabla.\vec{v})-\nabla^2 \vec{v} 
    \label{eqn:curl_of_curl_vec_identity}
\end{equation}

Equation~\ref{eqn:ewithoutj_timedomain} can be rewritten as 
\begin{equation}
    \nabla(\nabla.\vec{E})-\nabla^2 \vec{E} + \mu_0\epsilon(\vec{r})\frac{\partial^2 \vec{E}}{\partial t^2}=0
    \label{eqn:vec_identity_in_ewithoutj}
\end{equation}

Let us now derive the value of $\nabla.\vec{E}(\vec{r},t)$. 

\begin{equation}
    \begin{split}
 \nabla.(\epsilon(\vec{r}) \vec{E}(\vec{r},t))=0 
 \quad (Eqn.\ref{eqn:drho})  \\
\implies \vec{E}(\vec{r},t)\nabla.\epsilon(\vec{r},t) + \epsilon(\vec{r})\nabla.\vec{E}(\vec{r},t)=0
    \\ \implies \nabla.\vec{E}(\vec{r},t) =  \frac{-\vec{E}(\vec{r},t)\nabla.\epsilon(\vec{r})}{\epsilon(\vec{r})}
\end{split}
\label{eqn:nablae}
\end{equation}

The refractive index $n(\vec{r})$ is related to electrical permittivity by the relation $n(\vec{r})= \sqrt{\frac{\epsilon(\vec{r})}{\epsilon_0}}$. The speed of light $c$ in a medium of refractive index $n$ is $\frac{c_0}{n}= \sqrt{\frac{1}{\mu_0\epsilon}}$, where $n= \sqrt{\frac{\epsilon}{\epsilon_0}}$. In a medium composed of multiple refractive indices, the speed of light becomes a function of the position vector $\vec{r}$. It can be derived as follows. Let us denote the speed of light in varying dielectric media by $c(\vec{r})$.
\begin{equation}
    \begin{split}
        \epsilon(\vec{r}) = \epsilon_0 n(\vec{r})^2 \\
      \implies  c(\vec{r})= \frac{c_0}{n(\vec{r})}\\
        \implies n(\vec{r})= \frac{\sqrt{\frac{1}{\mu_0 \epsilon_0}}} {c(\vec{r})} 
    \end{split}
\end{equation}

\begin{equation}
    c(\vec{r})^2= \frac{1}{\mu_0 \epsilon(\vec{r})}
    \label{eqn:c_r_derive}
\end{equation}

Using Equations~\ref{eqn:vec_identity_in_ewithoutj} and \ref{eqn:nablae} we get

\begin{equation}
    \nabla^2 \vec{E} - \frac{1}{c(\vec{r})^2}\frac{\partial^2 \vec{E}}{\partial t^2}= \nabla \left ( \frac{-\vec{E} \nabla.\epsilon(\vec{r})}{\epsilon(\vec{r})} \right )
    \label{eqn:wave_eqn_inhomogeneous_media}
\end{equation}

The term in the $RHS$ of Equation~\ref{eqn:wave_eqn_inhomogeneous_media} cannot be neglected (approximated to $0$) because $\epsilon(\vec{r})$ undergoes abrupt changes with respect to the wavelength of light. However, Equation \ref{eqn:wave_eqn_inhomogeneous_media} is still linear in $\vec{E}$; therefore, the superposition principle holds. The superposition principle will be used to establish the cavity optimization framework in Section \ref{sec:obj_fn}.

\begin{stbox}
The section establishes the Principal Equation as a wave equation and shows that the superposition principle holds. The superposition principle will be used to support the optimization framework in later sections, which treat EM waves originating from multiple sources independently.
\end{stbox}

\subsection{Waveguide Wave Function}

Let us consider the dielectric distribution within a waveguide.
Assume $\epsilon(\vec{r})$ is a constant, where $\vec{r}$ is the
position vector. Equation~\ref{eqn:wave_eqn_inhomogeneous_media} becomes the following. 

\begin{equation}
     \nabla^2 \vec{E} - \frac{1}{c(\vec{r})^2}\frac{\partial^2 \vec{E}}{\partial t^2}=0
    \label{eqn:wave_eqn_homogeneous_media}
\end{equation}

Let $k$ be the wavenumber of the wave vector $||\vec{k}||,$ where $\vec{k}$ is the wave vector. $k= \frac{2\pi}{\lambda}= \frac{\omega}{c},$ where $\lambda$ is the wavelength, $c$ is the speed of light, and $\omega$ is the angular frequency. 

Recall the following relationships: 
$\vec{E}(\vec{r},t)= \vec{E}(\vec{r})e^{-i\omega t}, \frac{\partial^2 \vec{E}}{\partial t^2}= -\omega^2 \vec{E}(\vec{r}) e^{-i\omega t}= -\omega^2 \vec{E}(\vec{r},t).$ 
Equation \ref{eqn:wave_eqn_homogeneous_media} can be rewritten as 

\begin{equation}
\begin{split}
     \nabla^2 \vec{E}- \frac{1}{c^2}(-\omega^2 \vec{E})=0 \\ 
     \text{Putting } k= \frac{w}{c},\nabla^2 \vec{E}+ k^2 \vec{E}=0
\end{split}
   \label{eqn:helmhotz_equation}
\end{equation}

The above equation is called the Helmholtz equation. Both magnetic and electric fields satisfy it. Using the method of separation of variables and the Helmholtz equation, we find the solution to fields that exist in a waveguide. 

Assume that the wave propagates in the $x$ direction. Let us denote the electric fields $\vec{E}= \vec{E_t} + E_x\hat{x}$ and magnetic fields $\vec{H}= \vec{H_t} + H_x\hat{x}$, where $t$ denotes the axes normal to the $x$ axis while $\hat{x}$ denotes the $x$ axis \cite{Chew2016}.
$\vec{E}= (\vec{e}(y,z) + \hat{x}e_x(y,z))e^{-ik_x x},$ where $ \vec{e} (y,z)e^{-ik_x x}= \vec{E_t}, e_x(y,z)e^{-ik_x x}= E_x$. $k^2= k_x^2 + k_t^2,$ where $k_t^2=k_y^2+k_z^2$. Let us solve for $E_x$ using Equation~\ref{eqn:helmhotz_equation}. 

\begin{equation}
\begin{split}
    \nabla^2 E_x + k^2 E_x &= 0 \\
    \left( \frac{\partial^2}{\partial x^2} + \frac{\partial^2}{\partial y^2} + \frac{\partial^2}{\partial z^2} \right) E_x + k^2 E_x &= 0 \\
    \because \frac{\partial^2 E_x}{\partial x^2} = -k_x^2 e_x(y,z) e^{-ik_x x} &= -k_x^2 E_x \\
    \left( \frac{\partial^2}{\partial y^2} + \frac{\partial^2}{\partial z^2} \right) E_x - k_x^2 E_x + k^2 E_x &= 0 
\end{split}
\label{eqn:e_x_helmholtz_derive}
\end{equation}
From Equation~\ref{eqn:e_x_helmholtz_derive}, we arrive at the equation that solves $E_x$.
\begin{equation}
     \left( \frac{\partial^2}{\partial y^2} + \frac{\partial^2}{\partial z^2} \right) E_x + k_t^2 E_x  = 0 
     \label{eqn:tm_mode}
\end{equation}
In a similar fashion, we solve for $H_x$.
\begin{equation}
    \begin{split}
        \left ( \frac{\partial^2}{\partial y^2}+ \frac{\partial^2}{\partial z^2} \right )H_x + k_t^2 H_x=0
    \end{split}
     \label{eqn:te_mode}
\end{equation}
Equation~\ref{eqn:eh} says $\nabla \times \vec{E}(\vec{r})= i\omega\mu_0 \vec{H}(\vec{r})$ with the implication that the magnetic and electric fields are orthogonal to each other. Thus, if $H_x$ exists, $E_x$ is supposed to be 0 (transverse electric mode), and if $E_x$ were to exist, $H_x$ becomes 0 (transverse magnetic mode).  For transverse electric waves, we find $H_x$ through Equation~\ref{eqn:te_mode}.   

For transverse magnetic waves, we find $E_x$ through Equation~\ref{eqn:tm_mode}.  

From $E_x$ and $H_x$, the rest of the components of $\vec{E}$ and $\vec{H}$ can be found (the fields $E_t, H_t$).   

Using Equations \ref{eqn:eh} and \ref{eqn:hej} where $\vec{J}=0$, we get

\begin{equation}
    \begin{split}
    \left (\nabla_t+ \frac{\partial }{\partial x}\hat{x} \right ) \times (\vec{E_t} + E_x \hat{x}) &= i\omega\mu (\vec{H_t} + H_x \hat{x}) \\
\left (\nabla_t+ \frac{\partial }{\partial x}\hat{x} \right ) \times (\vec{H_t} + H_x \hat{x}) &= -i\omega\epsilon (\vec{E_t} + E_x \hat{x})
    \end{split}
    \label{eqn:separation_of_var_in_eqn_sans_J}
\end{equation}

Equating the transverse $t$ component in Equation \ref{eqn:separation_of_var_in_eqn_sans_J}, we get

\begin{equation}
    \nabla_t \times E_x \hat{x} + \frac{\partial}{\partial x}\hat{x} \times \vec{E_t}  =i \omega \mu \vec{H_t} 
    \label{eqn:h_t_term}
\end{equation}

\begin{equation}
         \nabla_t \times H_x \hat{x} + \frac{\partial}{\partial x}\hat{x} \times \vec{H_t} = -i \omega \epsilon \vec{E_t}
   \label{eqn:e_t_term}
\end{equation}

Let us substitute $E_t$ of Equation~\ref{eqn:e_t_term} into Equation~\ref{eqn:h_t_term}.

\begin{equation}
    \begin{split}
        \vec{H_t} &= \frac{1}{i\omega \mu} \left (\nabla_t \times E_x \hat{x}+ \frac{\partial}{\partial x} \hat{x} \times \left (\frac{1}{-i \omega \epsilon} \left (\nabla_t \times H_x \hat{x} + \frac{\partial}{\partial x} \hat{x} \times \vec{H_t} \right ) \right ) \right )
    \end{split}
    \label{eqn:ht_ex_hx}
\end{equation}

Let us use the following vector identities.

\begin{equation}
    \begin{split}
        \hat{x}\times \nabla_t \times \hat{x}= \nabla_t \\
        \hat{x}\times\hat{x}\times \vec{H_t}= - \vec{H_t}
    \end{split}
    \label{eqn:vector_rules}
\end{equation}

Let us solve Equation~\ref{eqn:ht_ex_hx}.

\begin{equation}
     \vec{H_t}= \frac{1}{i\omega\mu} \left (\nabla_t \times E_x \hat{x} + \frac{1}{-i \omega \epsilon} \left (\frac{\partial}{\partial x }\nabla_t H_x - \frac{\partial^2}{\partial x^2} \vec{H_t} \right ) \right ) 
\end{equation}

Since the $\vec{H_t}$ fields have a dependence on $e^{ik_x x}$, $\frac{\partial^2}{\partial x^2}$ leads to $-k_x^2$.

\begin{equation}
    \implies \vec{H_t} = \frac{1}{i\omega\mu} \left (\nabla_t \times E_x \hat{x} + \frac{1}{-i \omega \epsilon} \left ( \frac{\partial}{\partial x }\nabla_t H_x + k_x^2 \vec{H_t} \right ) \right)
    \label{eqn:h_t_intermediate}
\end{equation}

Let us consider $k$ in terms of $\omega$ and $c$, which yields $k= \frac{\omega}{c} = \omega \sqrt{\mu \epsilon}$.  

The equation governing transverse magnetic fields $H_t$ can be derived from Equation \ref{eqn:h_t_intermediate}.

\begin{equation}
\begin{split}
    \vec{H_t} &= \frac{1}{i\omega\mu} \left (- \hat{x} \times \nabla_t E_x + \frac{1}{-i \omega \epsilon} \left ( \frac{\partial}{\partial x }\nabla_t H_x + k_x^2 \vec{H_t} \right ) \right) \\
        &= \left (- \frac{1}{i\omega\mu} \hat{x} \times \nabla_t E_x + \frac{1}{\omega^2 \mu \epsilon} \left ( \frac{\partial}{\partial x }\nabla_t H_x + k_x^2 \vec{H_t} \right ) \right) \\
        &= \left (- \frac{1}{i\omega\mu} \hat{x} \times \nabla_t E_x + \frac{1}{k^2} \left ( \frac{\partial}{\partial x }\nabla_t H_x + k_x^2 \vec{H_t} \right ) \right) \\
    \implies \frac{k^2-k_x^2}{k^2} \vec{H_t} &= \left (- \frac{1}{i\omega\mu} \hat{x} \times \nabla_t E_x + \frac{1}{k^2} \left ( \frac{\partial}{\partial x }\nabla_t H_x  \right ) \right) \\
    \implies \vec{H_t} &= \frac{1}{k^2-k_x^2} \left ( \frac{ik^2}{\omega\mu} \hat{x} \times \nabla_t E_x +  \frac{\partial}{\partial x }\nabla_t H_x  \right )
\end{split}
\label{eqn:h_t_term_final}
\end{equation}

\begin{equation}
    \vec{H_t}= \frac{1}{(k^2-k_x^2)} \left (\frac{\partial}{\partial x}\nabla_t H_x + i \omega \epsilon \hat{x}\times \nabla_tE_x \right )
    \label{eqn:transverse_H}
\end{equation}

Following the above approach, we arrive at $\vec{E_t}$ when substituting $\vec{H_t}$ of Equation \ref{eqn:h_t_term} into Equation \ref{eqn:e_t_term}.

\begin{equation}
     \vec{E_t}= \frac{1}{(k^2-k_x^2)} \left ( \frac{\partial}{\partial x}\nabla_t E_x - i \omega \mu \hat{x}\times \nabla_t H_x \right ) 
       \label{eqn:transverse_E}
\end{equation}

Let us derive the boundary conditions \cite{Chew2016}. 
\begin{figure}[!htb]
\centering
\includegraphics[width=0.5\linewidth]{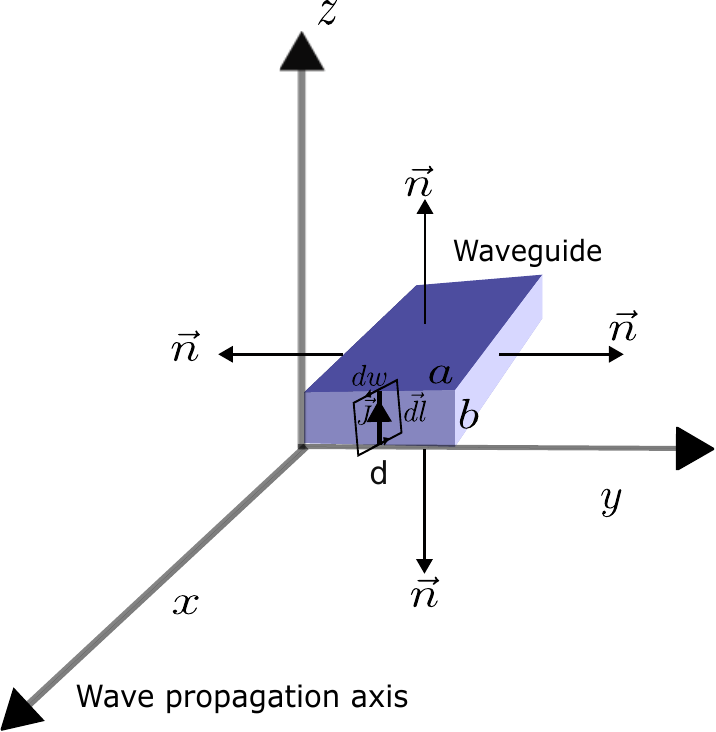}
\caption{Wave propagation at the boundary of the waveguide}
\label{fig:current_source}
\end{figure}

Let us go back to Equation~\ref{eqn:ebr} to establish boundary conditions for electric fields. Let us refer to Figure \ref{fig:material_surface}.  Equation~\ref{eqn:ebr} shows $\int_C \vec{E}.\vec{dl}= -\iint_S(\frac{\partial \vec{B}}{\partial t}.\vec{dS}),$ where $\vec{dS}$ is the area vector of the infinitesimal piece of surface bounded by C. Let $dw$ be the width of the closed loop $C$, tending to $0$. Thus, the area encompassed by $dS$ is 0.  Therefore, $\int_C \vec{E}.\vec{dl}=0$, with the implication that the tangential component of the electric field must be constant across the boundary of any two materials. Suppose that the waveguide is metallic. There cannot be any tangential component of $\vec{E}$ present on a metallic surface. Thus, the boundary condition becomes that there is no tangential component of $\vec{E}$ present at the surface. 

\begin{figure}[!htb]
\centering
\includegraphics[width=0.5\linewidth]{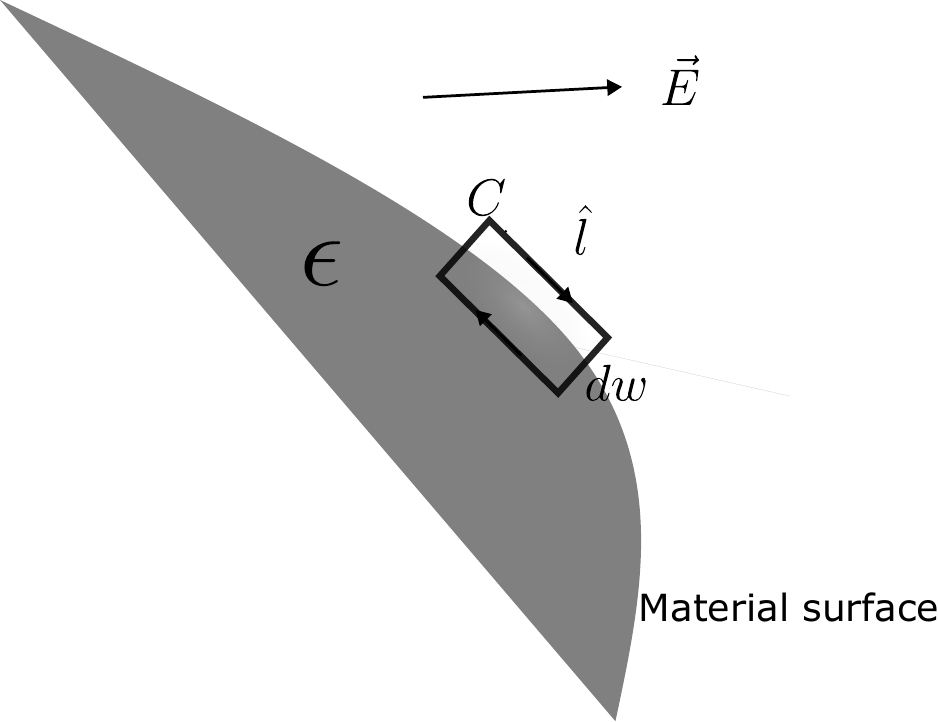}
\caption{Wave propagation at the material surface }
\label{fig:material_surface}
\end{figure}

For the transverse electric mode (TEM), $E_x=0$. To ensure that $\vec{E_t}$ is $0$ on the surface, let us derive the restrictions on the fields on the surface.
Plugging $E_x=0$ into Equation~\ref{eqn:transverse_E} results in the following.

\begin{equation}
    \vec{E_t}= \frac{1}{(k^2-k_x^2)}(-i\omega\mu\hat{x}\times \nabla_t H_x) 
    \label{eqn:e_t}
\end{equation}

$\vec{E_t}$ becomes $0$ when $\hat{x} \times \nabla_t H_x = 0$. Let us refer to Figure \ref{fig:current_source} where $\vec{J}$ is considered $0$. The condition is equivalent to $\hat{n}.\nabla_t H_x = \frac{\partial}{\partial n} H_x=0$. 
For the transverse magnetic mode (TM), $H_x=0$. In this case, $ \frac{\partial}{\partial n} H_x=0$ on the surface. $\vec{E_t}$ at the surface becomes $\vec{E_t}= \frac{1}{(k^2-k_x^2)} \left (\frac{\partial}{\partial x}\nabla_t E_x \right ) $. The boundary condition for the TM mode becomes $E_x=0$ for $\vec{E_t}$ to become $0$ on the surface of the material. 

\begin{stbox}
The section solves the wave equation for a rectangular waveguide along with the boundary conditions of the magnetic and electric fields with the assumption being that the waveguide is metallic. It shows that the derivative of $H_x$ with respect to the normal of the surface is $0$ for a transverse electric mode (Figure \ref{fig:current_source}). The boundary condition tells us that $H_x$ should be of the form $\cos(\frac{m\pi}{a}y)\cos(\frac{n\pi}{b}z)$. The structure ensures that differentiating $H_x$ with respect to $y$ and $z$ produces a sine term that becomes $0$ whenever $y$ and $z$ reach the extremes of the waveguide, respectively. On the other hand, a transverse magnetic mode requires $E_x$ to become $0$ on the surface. This brings out the structure of $E_x$. It needs to be of the form $\sin(\frac{m\pi}{a}y)\sin(\frac{n\pi}{b}z)$. At the extremes of the waveguide along $y$ and $z$, sine terms ensure $E_x$ becomes $0$.
\end{stbox}

\subsection{Activating Waveguide Modes Using $\vec{J}$}

Let us list the equations that govern the TE and TM modes.

Transverse Electric Mode:
\begin{equation}
\begin{split}
     (\nabla_t^2 + k_t^2)H_x=0  \\ \frac{\partial}{\partial n}H_x=0 \text{ on surface}
\end{split}  
\label{eqn:TE_mode_condition}
\end{equation}
Transverse Magnetic Mode:
\begin{equation}
\begin{split}
     (\nabla_t^2 + k_t^2)E_x=0  \\ E_x=0 \text{ on surface}
\end{split} 
\label{eqn:TM_mode_condition}
\end{equation}

Let the waveguide be rectangular. 

\begin{figure}[!htb]
\centering
\includegraphics[width=0.5\linewidth]{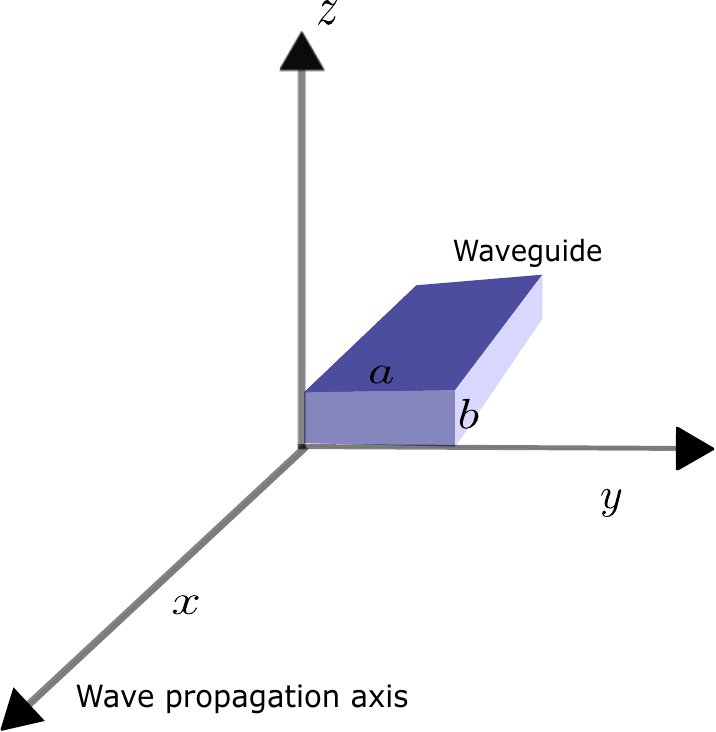}
\caption{Waveguide}
\label{fig:rect_waveguide}
\end{figure}

Let us look at the field patterns as the wave propagates in the waveguide (in Figure \ref{fig:rect_waveguide}) in the $x$ direction.
The field patterns satisfying the conditions of a transverse electric mode (see Eqn.~\ref{eqn:TE_mode_condition}) are as follows: 

\begin{equation}
\begin{split}
     H_x= H_{mn} \cos \left ({\frac{m\pi}{a}y} \right )\cos \left ( {\frac{n\pi}{b}z} \right ) e^{ik_xx}   \text{ with }    k_t^2= \frac{m\pi}{a}^2 + \frac{n\pi}{b} ^2
\end{split}
    \label{eqn:TEM_in_rect_waveguide}
\end{equation}

Similarly, the field patterns satisfying the transverse magnetic mode (Eqn.~\ref{eqn:TM_mode_condition}) are the following.

\begin{equation}
\begin{split}
     E_x= E_{mn} \sin \left ( {\frac{m\pi}{a}y} \right )\sin \left ({\frac{n\pi}{b}z} \right )e^{ik_xx}   \text{ with }    k_t^2= \frac{m\pi}{a}^2 + \frac{n\pi}{b} ^2
\end{split}
    \label{eqn:TM_in_rect_waveguide}
\end{equation}

$H_{mn}$ and $E_{mn}$ are the magnetic field and electric field amplitudes, respectively. $a$ is the dimension of the waveguide in the $y$ direction; $b$ is the dimension of the waveguide in the $z$ direction. $m$ and $n$ are whole numbers that identify the field patterns that can exist in the system (also known as mode numbers). The mode numbers are connected to the medium's electric permittivity by the following relation.

Let us revisit the equation for $k$.

\begin{equation*}
\begin{split}
      k^2 &= k_x^2 + k_t^2 \\
      k^2 &= \omega^2 \mu \epsilon \\
      \implies k_x^2 &= \omega^2 \mu \epsilon  - \left (\frac{m\pi}{a}^2 + \frac{n\pi}{b} ^2 \right )
\end{split}
\end{equation*}

If $k_x^2$ falls below $0$, $k_x$ has an imaginary component. The $e^{ik_x x}$ term in $H_x$ becomes $e^{-|k_x|x},$ therefore, the mode becomes non-propagating. Thus, for the mode to exist, $k_x^2 \ge 0$. 
The cutoff frequency for the mode to exist becomes the following. 

\begin{equation*}
\begin{split}
\omega_{m,n,\epsilon} = \frac{1}{\sqrt{\mu \epsilon}} \sqrt{\frac{m\pi}{a}^2 + \frac{n\pi}{b} ^2}
\end{split}
\label{eqn:cutoff_freq_for_propagating_mode}
\end{equation*}

Let us inject a current source $\vec{J}$ now.
Let us use Equation \ref{eqn:efromj}. 
We shall be using the following results.

\begin{equation}
     ((\nabla \times \nabla \times)- k^2)\vec{E}=(i\omega \mu \vec{J}) 
    \label{eigenmode_eqn:efromj}
\end{equation}

If two functions, $f$ and $g$, are orthonormal, then the following is true.

\begin{equation}
    \int g^*(r)f(r)dr = \begin{cases}
        1 & \text{if $g(r)= f(r)$} \\
        0 & \text{otherwise }
    \end{cases}
    \label{eqn:orthonormal_vec_property}
\end{equation}

Let us define the Dirac delta function $\delta (x)$.

\begin{equation}
    \begin{split}
    & \int_{-\infty}^\infty \delta(x) = 1 \\
    & \delta(x) = 0 \quad (x \ne 0) \\
    \end{split}
    \label{eqn:dirac_delta}
\end{equation}

Let us represent $\vec{E}$ in terms of its eigenvectors and eigenvalues. They are calculated after solving the electric field eigenvalue equation (Eqn.~\ref{eqn:ewithoutj}). Let $\vec{F_m}$ be the eigenvector and $k_m$ be the eigenvalue of $\vec{E}$. 
Let us use the cross-product rule of vector calculus, where $\vec{A}$ and $\vec{B}$ are vector fields. 

\begin{equation}
    \begin{split}
        (\vec{A} \times \nabla).\vec{B} = \vec{A} . (\nabla \times \vec{B}) 
    \end{split}
\end{equation}

Let us prove that the operator $\nabla \times \nabla \times$ is a Hermitian operator. An operator $A$ is Hermitian if the following holds for any two vector fields $\vec{u}$ and $\vec{v}$: $(A\vec{u},\vec{v})=(\vec{u},A\vec{v}),$ where $(F,G)$ denotes the inner product of any two vector fields $\vec{F}$ and $\vec{G}$= $\int \vec{F}^*(\vec{r})\vec{G}(\vec{r})dr$. 
\begin{align*}
    (\vec{u}, \nabla \times \nabla \times \vec{v}) = \int \vec{u}^*.(\nabla \times \nabla \times \vec{v})dr \\
    =  \int (\nabla \times \vec{u})^*.(\nabla \times \vec{v}) dr \\
    = \int (\nabla \times \nabla \times \vec{u})^*. \vec{v} dr \\
    = (\nabla \times \nabla \times \vec{u}, \vec{v})
\end{align*}

Thus, the operator $\nabla \times \nabla \times$ is indeed Hermitian, which means Equation \ref{eigenmode_eqn:efromj}, where $\vec{J}=0$, produces orthogonal eigenvectors with real-valued eigenvalues. The electric field can be represented as $E(r)= \sum_m a_m \vec{F_m}(\vec{r})$, where $\vec{F_m}$ satisfies Equation \ref{eqn:efromj}: 

\begin{equation}
     ((\nabla \times \nabla \times)- k_m^2)\vec{F_m}(\vec{r})= 0
    \label{eigenmode_eqn:Fm_field_vec}
\end{equation}

The $a_m$ values can be estimated using the following equations. These values indicate which modes $\vec{F_m}$ are activated in the system after injecting the stimulus $\vec{J}$. 

Using Equations \ref{eigenmode_eqn:efromj} and \ref{eigenmode_eqn:Fm_field_vec}, we arrive at the following.

\begin{equation*}
    \begin{split}
        \nabla \times \nabla \times \sum_m a_m \vec{F_m}(\vec{r})- k^2 \sum_m a_m \vec{F_m}(\vec{r}) = i\omega\mu \vec{J}(\vec{r}) \\
       \implies  \sum_m a_m (k_m^2 - k^2) \vec{F_m} (r) = i\omega \mu \vec{J}(\vec{r}) 
    \end{split}
\end{equation*}

Using the orthonormal property \ref{eqn:orthonormal_vec_property} of $\vec{F_m}$, we arrive at the estimate of $a_m$.
\begin{equation*}
    \begin{split}
       a_m= i \omega \mu \frac{\int \vec{F_m}^*\vec{J}(\vec{r})dr}{k_m^2-k^2}
    \end{split}
\end{equation*}

\begin{stbox}
    The section shows that the operator $\nabla \times \nabla \times$ is Hermitian, which implies that the electric and magnetic fields can be represented in terms of orthogonal eigenvectors and real-valued eigenvalues. The orthogonal property is used to arrive at an expression that determines the distribution of modes of propagation upon being fed by a stimulus in the form of a current. This will eventually serve as the basis for encoding data. 
\end{stbox}

\subsection{Data Encoding Using $\vec{J}$}

Let us define the source of current $\vec{J_s}$ as $\vec{J_s}= J_0 \delta(y-d)\hat{z}$, where $\delta(y)$ is the Dirac delta function, $d$ is the position along the $y$ axis, where the current source is placed, $\hat{z}$ is the axis of the current flow, and $J_0$ measures the power emitted by the source. Here, $\vec{J_s}$ is a current sheet, thus confined to the surface. We can refer to Figure \ref{fig:current_source}, where $\vec{J}$ is the current sheet $\vec{J_s}$. We place $\vec{J}$ so that it activates the fundamental mode of the system (the mode with the lowest cutoff frequency). Electric fields are induced by static charges. In our case, we have the current as the source, and thus there are no static charges. Therefore, magnetic fields are generated in directions perpendicular to the direction of current flow (Equation \ref{eqn:hdj}). Magnetic fields in turn induce electric fields that are perpendicular to the former, which implies that there will be no $x$ component in $\vec{E}$. Thus, TE modes get excited. 

The boundary conditions for magnetic fields when a current source is placed can be derived from Equations \ref{eqn:eh}, \ref{eqn:hej} and \ref{eqn:efromj} as follows \cite{Chew2016}.

Let us refer to Equation \ref{eqn:efromj} and the closed loop in Figure~\ref{fig:current_source}.

\begin{equation*}
    \iint_S (\nabla \times \frac{1}{\mu_0} \nabla  \times \vec{E}).d\vec{S} - \omega^2 \iint_S (\epsilon(\vec{r})\vec{E}).d\vec{S} = i \omega \iint_S \vec{J}.d\vec{S} 
\end{equation*}

Invoking the Stokes' theorem, we arrive at the following.

\begin{equation*}
    \oint_C (\frac{1}{\mu_0} \nabla \times \vec{E}).d\vec{l} - \omega^2  \iint_S (\epsilon(\vec{r})\vec{E}).d\vec{S} = i \omega \iint_S \vec{J}.d\vec{S}
\end{equation*}

Let us assume that we have a sheet of current such that $dw$ tends to $0$, therefore, $ d\vec{S}$ also becomes 0.

\begin{equation*}
    \begin{split}
        & \implies \oint_C \left (\frac{1}{\mu_0} \nabla \times \vec{E} \right ).d\vec{l}  = i \omega \vec{J_s} \\
& \implies \vec{x} \times \left (\frac{1}{\mu_0} \nabla \times \lim_{x \rightarrow 0^+}\vec{E(x)} - \frac{1}{\mu_0} \nabla \times \lim_{x \rightarrow 0^-}\vec{E(x)} \right ) = i \omega \vec{J_s}
    \end{split}
\end{equation*}

Using Equation \ref{eqn:eh}, we get the following.

\begin{equation*}
    \vec{x} \times i \omega \left (\lim_{x \rightarrow 0^+}\vec{H(x)} -  \lim_{x \rightarrow 0^-}\vec{H(x)} \right ) = i \omega \vec{J_s}
\end{equation*}

\begin{equation}
    \vec{x} \times \left (\lim_{x \rightarrow 0^+}\vec{H(x)} -  \lim_{x \rightarrow 0^-}\vec{H(x)} \right ) =  \vec{J_s}
     \label{data_encoding_eqn:boundary_cond}
\end{equation}

Since $\vec{J}$ has no component along the $\hat{y}$ direction, Equation \ref{data_encoding_eqn:boundary_cond} indicates that $\lim_{x \rightarrow 0^{+}}\vec{H}(x)- \lim_{x \rightarrow 0^{-}}\vec{H}(x)$ must have no $\hat{z}$ component. 
From Equations \ref{eqn:transverse_H} and \ref{eqn:TEM_in_rect_waveguide}, the following deductions can be made. 
Whenever $n \neq 0$, $H_{mn}$ should be $0$ to ensure $\vec{H}$ is $0$ in the $\hat{z}$ direction. This also implies that $H_x$ becomes $0$, resulting in $\vec{H}$ being $0$ in the $\hat{y}$ direction. Thus, we consider $n=0$ because the fields $\vec{H}$ cease to exist for any other value of $n$. 


Using Equation \ref{eqn:TEM_in_rect_waveguide}, we arrive at the following.

\begin{equation}
\begin{split}
    \lim_{x\rightarrow 0^+} H_x &=  \sum_m H_{m0} \cos \left ({\frac{m\pi}{a}y} \right ) e^{ik_x(0+x)} \\
    &= \sum_m H_{m0} \cos \left ({\frac{m\pi}{a}y} \right ) e^{ik_x(x)}\\
   \lim_{x\rightarrow 0^-} H_x &=  \sum_m H_{m0} \cos \left ({\frac{m\pi}{a}y} \right ) e^{ik_x(0-x)} \\
   & =  \sum_m H_{m0} \cos \left ({\frac{m\pi}{a}y} \right ) e^{-ik_x(x)}
\end{split}
\label{eqn:lim_Hx}
\end{equation}

Let us feed $H_x$ of Equation~\ref{eqn:lim_Hx} into Equation~\ref{eqn:transverse_H}. 

\begin{equation}
\begin{split}
  \lim_{x \rightarrow 0^+}\vec{H_t} &= \sum_m \frac{1}{(k^2-k_x^2)}ik_x \left (-H_{m0}sin \left (\frac{m\pi}{a}y \right )\frac{m\pi}{a} \right ) e^{ik_xx} \hat{y} \\
  \lim_{x \rightarrow 0^-}\vec{H_t} &= \sum_m \frac{1}{(k^2-k_x^2)}(-ik_x) \left (-H_{m0}sin \left (\frac{m\pi}{a}y \right )\frac{m\pi}{a} \right ) e^{-ik_xx} \hat{y}
\end{split}
    \label{eqn:Transverse_H_for_J_encoding}
\end{equation}

Let us use Figure \ref{fig:current_source} as a reference point. Since $\vec{E_t}$ is supposed to remain constant across the boundary, the term $H_{m0}$ must also remain constant across the boundary $\quad (Eqn. \ref{eqn:transverse_E})$. Using $H_t$ from Equation \ref{eqn:Transverse_H_for_J_encoding} in Equation \ref{data_encoding_eqn:boundary_cond}, we arrive at the following. 

\begin{equation}
    \begin{split}
        \frac{1}{(k^2-k_x^2)}ik_x \frac{m\pi}{a}\sum_m \hat{x} \times (lim_{x \rightarrow 0^+} \left (-H_{m0}sin \left (\frac{m\pi}{a}y \right )\frac{m\pi}{a} e^{ik_x(x)} \right ) \hat{y} - \\
       lim_{x \rightarrow 0^-} \left (H_{m0}sin \left (\frac{m\pi}{a}y \right )\frac{m\pi}{a} e^{-ik_x(x)} \right ) \hat{y} ) = \vec{J_s} \\
       \implies - 2\sum_m ik_xH_{m0} \frac{1}{k^2-k_x^2}\frac{m\pi}{a}sin \left (\frac{m\pi}{a} y \right ) \hat{z} = J_0 \delta(y-d) \hat{z}
    \end{split}
    \label{eqn:j_exciting_modes_derive}
\end{equation}

From the derivation in Equation \ref{eqn:j_exciting_modes_derive}, we arrive at the equation that connects $H_{m0}$ to the current $\vec{J_s}$.

\begin{equation}
    \begin{split}
       - 2\sum_m ik_xH_{m0} \frac{1}{k^2-k_x^2}\frac{m\pi}{a}sin(\frac{m\pi}{a} y)= J_0 \delta(y-d)
    \end{split}
    \label{eqn:J_encoding_Exciting_modes}
\end{equation}

The lowest value that $m$ can assume is 1 (if $m=0,$ the term $k^2-k_x^2=k_t^2= \left (\frac{m\pi}{a}^2 + \frac{n\pi}{b}^2 \right )$ becomes 0 (since $n$ is also zero), which will make the fields $\vec{H_t}$ undefined. 
Let us denote the term $\frac{-2ik_xH_{m0}}{k^2-k_x^2}\frac{m\pi}{a}$ by $c_m$. Equation \ref{eqn:J_encoding_Exciting_modes} becomes the following.

\begin{equation}
    \sum_{m=1}^\infty c_m sin(\frac{m\pi}{a}y) = J_0 \delta(y-d)
\end{equation}

Let us determine $c_m$ using Fourier series analysis.

Let us use the following orthogonality relation of sine waves \cite{2021Fourier}, where $m$ and $n$ are integers.

\begin{equation}
    \int_{-L}^{L} sin \left (\frac{m\pi x}{L} \right ) sin \left (\frac{n\pi x}{L} \right )dx = L \delta(n-m) 
    \label{eqn:fourier_sine_series_ortho_condition}
\end{equation}

Using equation \ref{eqn:fourier_sine_series_ortho_condition} to determine $c_m$:
Let us define a function $f(y)= \sum_{m=1}^\infty c_m sin \left (\frac{m\pi}{a}y \right )$, and
a positive integer $n$.

\begin{equation}
    \begin{split}
        \int_0^a f(y)sin \left (\frac{n\pi y}{a} \right )dy =  \sum_{m=1}^\infty c_m \int_0^a sin \left (\frac{m\pi}{a}y \right )sin \left (\frac{n\pi}{a}y \right ) dy \\
        \implies \int_0^a f(y)sin \left (\frac{n\pi y}{a} \right )dy =  \sum_{m=1}^\infty c_m \frac{1}{2}\int_{-a}^a sin \left (\frac{m\pi}{a}y \right )sin \left (\frac{n\pi}{a}y \right ) dy  
    \end{split}
\end{equation}

Let us use Equation \ref{eqn:fourier_sine_series_ortho_condition}.

\begin{equation}
     \int_0^a f(y)sin \left (\frac{n\pi y}{a} \right )dy =  \sum_{m=1}^\infty c_m \frac{1}{2} a \delta(n-m) = a\frac{c_n}{2} 
\end{equation}

Therefore, $c_n$ becomes the following.

\begin{equation}
    \begin{split}
        c_n &= \frac{2}{a} \int_0^a f(y)sin \left (\frac{n\pi y}{a} \right )dy \\
           \implies c_n &= \frac{2}{a} \int_0^a J_0 \delta(y-d) sin \left (\frac{n\pi y}{a} \right )dy 
    \end{split}
\end{equation}

\begin{equation}
     c_n= \frac{2}{a} J_0 sin \left (\frac{n\pi d}{a} \right )
     \label{eqn:Fourier_series_analysis}
\end{equation}

Let us use Equation \ref{eqn:Fourier_series_analysis} to find $H_{m0}$.

\begin{equation}
    H_{m0}= - \frac{k^2-k_x^2}{2ik_x}\frac{a}{m \pi}\frac{2}{a} J_0 sin \left (\frac{m\pi d}{a} \right ) = - \frac{k^2-k_x^2}{ik_x}\frac{J_0}{m \pi} sin \left (\frac{m\pi d}{a} \right )
    \label{eqn:TE_mode_excitation_coefficient}
\end{equation}

If the fundamental mode needs to be activated, then $m=1$. $H_{10}$ peaks when the current source is placed at $d= \frac{a}{2}$; thus indicating that the placement and magnitude of the current source directly affect the magnetic field distribution in the system. 

Equation \ref{eqn:TE_mode_excitation_coefficient} shows that the excitation coefficient $H_{mn}$ is directly proportional to the amplitude $J_0$ of the light source $\vec{J}$. Equation \ref{eqn:TEM_in_rect_waveguide} shows that the amplitude of the magnetic field is $|H_{mn}|$. Moreover, switching the direction of $\vec{J}$ also switches the sign of the coefficient $H_{mn}$ (Equation \ref{eqn:TE_mode_excitation_coefficient}), in turn, switching the direction of magnetic fields (Equation \ref{eqn:TEM_in_rect_waveguide}). Equations \ref{eqn:transverse_E}, \ref{eqn:transverse_H} show that the $H_x$ fields that are estimated through the calculation of $H_{mn}$ (from Equations \ref{eqn:TE_mode_excitation_coefficient} and \ref{eqn:TEM_in_rect_waveguide}) directly affect the electric field. Therefore, the data can be encoded in terms of the magnitude and direction of the source $\vec{J}$, as any changes in its direction and magnitude will be reflected in the electric and magnetic fields generated in the system. 

\begin{stbox}
    The section presents Equation \ref{eqn:TE_mode_excitation_coefficient} as the basis for encoding the data on which the rest of the sections are based. Since we will be dealing with the fundamental mode of propagation, we will keep $m$ as $1$. The equation shows a direct relationship between the amplitudes of the magnetic field and the current source. Furthermore, the equation shows that reversing the direction of current reverses the phase of fields as well. Thus, the current source can be used to carry information about the data.  
\end{stbox}


\subsection{Optical Circuit}
Let us say $x$ and $y$ are two numbers whose product has to be calculated in the optical domain. The principle behind the optical vector dot products \cite{zhu2023lighteningtransformerdynamicallyoperatedopticallyinterconnectedphotonic} states that if an input pair $(x,y)$ is encoded as a light wave with wavelength $\lambda$, then the following equations hold. 

Let $P_o$ be the output port. 
$P^r_o$ be the right output port and $P^l_o$ be the left output port. 

\begin{figure}[!htb]
\centering
\includegraphics[width=0.5\linewidth]{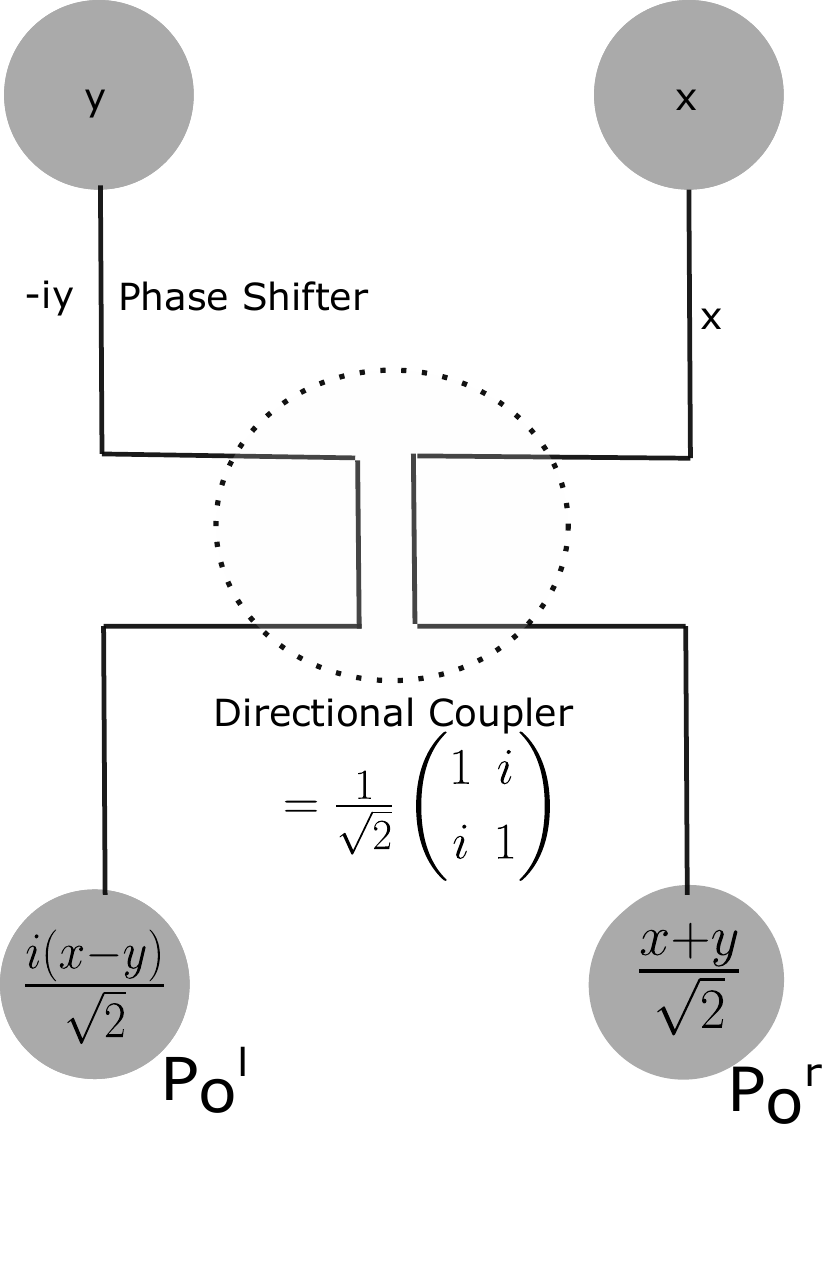}
\caption{Optical circuit}
\label{fig:optical_circuit}
\end{figure}

Let us lay down the following definitions.
Let us denote the starting state of the optical cavity by $\varphi$, where $\varphi $ is a column vector such that $\varphi = [x,y]^T$, where $x$ and $y$ are two numbers, encoded in the form of stimuli $J_1$ and $J_2$. The encoding is supported by Equation \ref{eqn:TE_mode_excitation_coefficient}, which shows that altering the magnitude and direction of the current source $J$ produces a direct effect on the generated magnetic and electric field distributions. Under the assumption that the data is in the range $[-1,1]$, the magnitude of the data is encoded as the amplitude of $J$, while the sign can be encoded in the form of the propagation axis of $J$. 

\subsubsection{Optical Set-Up}
The setup can be represented in terms of the starting state $\varphi= xu + yv = [u,v][x,y]^T,$ where $u$ is the column vector representing the right waveguide, which is $[1,0]^T$, $v$ is the column vector representing the left waveguide, which is $[0,1]^T$.  
Let us denote operations on the light state $\varphi$ through the matrix $U$ that denotes the gate operation: $U\varphi_1= \varphi_2,$ where $\varphi_1 $ is the initial state while $\varphi_2$ is the updated state after the gate operation. 
$U$ can be denoted as a 2 by 2 matrix, which is $\begin{pmatrix} a_{11} & a_{12}\\ a_{21} & a_{22} \end{pmatrix}$. Here, the gate has four ports, which are two input ports and two output ports. The entries $a_{11}, a_{12}, a_{21}, a_{22}$ map the input to the output ports. Let us say that the input ports carry $a$ and $b$, which can be represented by $ [a,b]^T$, while the output ports carry $a'$ and $b'$, which are represented by $ [a', b']^T$. $U$ maps $a$ and $b$ to $a'$ through $a'= a_{11}a + a_{12}b$ and to $b'$ through $b'= a_{21}a + a_{22}b$. Furthermore, $U$ is a unitary matrix, which means $U^\dagger U= I,$ where $I$ is the identity matrix. The unitary property of the gates ensures that $a$ and $b$ preserve their magnitude regardless of the operations performed (the gates only perform rotations and reflections). 
    
Let us lay down the definition of gate operations.

\begin{enumerate}
    \item Phase shifter $P(\theta)$: The gate preserves the phase of $a$ while rotating $b$ by $\theta$ radians anticlockwise. $P(\theta)= $ $\begin{pmatrix} 1 & 0\\ 0 & e^{i\theta}\end{pmatrix}$ 
    \item  Directional Coupler $H$: The gate takes in $a$ and $b$, splits them into two paths such that $a'$ receives $\frac{a+ib}{\sqrt{2}}$ and $b'$ receives $\frac{ia+b}{\sqrt{2}}$. $H=\frac {1}{\sqrt{2}}\begin{pmatrix} 1 & i \\ i & 1 \end{pmatrix}$
\end{enumerate}

The principle of the optical circuit (Figure \ref{fig:optical_circuit}) can be modeled by the equation below. 

\begin{equation}
\label{eqn:optical_circuit_equation}
    \begin{pmatrix} P_0^r \\ P_0^l \end{pmatrix}= HP(-\frac{\pi}{2}) \varphi = \frac {1}{\sqrt{2}}\begin{pmatrix} 1 & i \\ i & 1 \end{pmatrix} \begin{pmatrix} 1 & 0 \\ 0 & e^{-i\pi/2} \end{pmatrix} \begin{pmatrix} x \\ y\end{pmatrix}= \frac {1}{\sqrt{2}}\begin{pmatrix} x+y \\ i(x-y) \end{pmatrix}
\end{equation}


The photodiode (PD) at the end of each output port can convert incident signals into photocurrent (Figure \ref{fig:optical_cavity}). The generated photocurrent is proportional to the intensity of the signals, which is the square of optical magnitudes. Thus, the photocurrents generated on the right and left PDs denoted as $I_0$ and $I_1$ can be expressed as follows.

$ \begin{pmatrix} I_0 \\ I_1 \end{pmatrix} \propto \begin{pmatrix} ||x+y||^2 \\ ||i(x-y)||^2 \end{pmatrix} \propto  \begin{pmatrix} (x+y)^2 \\  (x-y)^2 \end{pmatrix}$,
The final output current is proportional to $((x+y)^2-(x-y)^2)$, which is proportional to $x.y$, which is the product of x and y.

\begin{stbox}
    The section presents the gate operations that are used to produce photocurrents that are proportional to $xy$, which is the product of two numbers $x$ and $y$ in the optical domain. These operations use directional couplers and phase shifters. Our goal is to reconstruct an optical cavity that is capable of performing these operations directly. Thus, we treat Equation \ref{eqn:optical_circuit_equation} as the target behavior in the reconstruction framework of the optical cavity. 
\end{stbox}

\begin{figure}[!htb]
\centering
\includegraphics[width=0.5\linewidth]{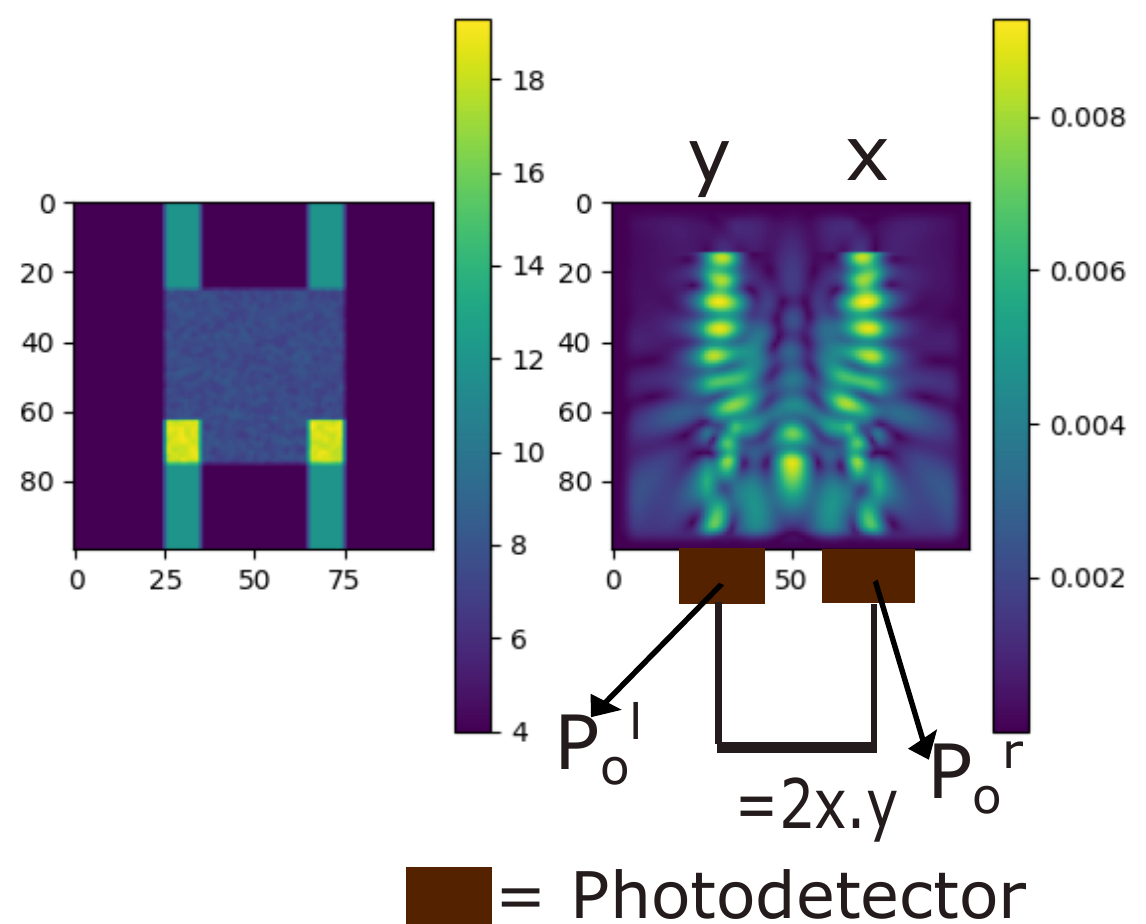}
\caption{Proposed optical cavity design}
\label{fig:optical_cavity}
\end{figure}

\subsection{Optical Cavity Setup} \label{sec:optical_cavity_setup}

\begin{figure}[H]
\centering
\includegraphics[width=0.5\linewidth]{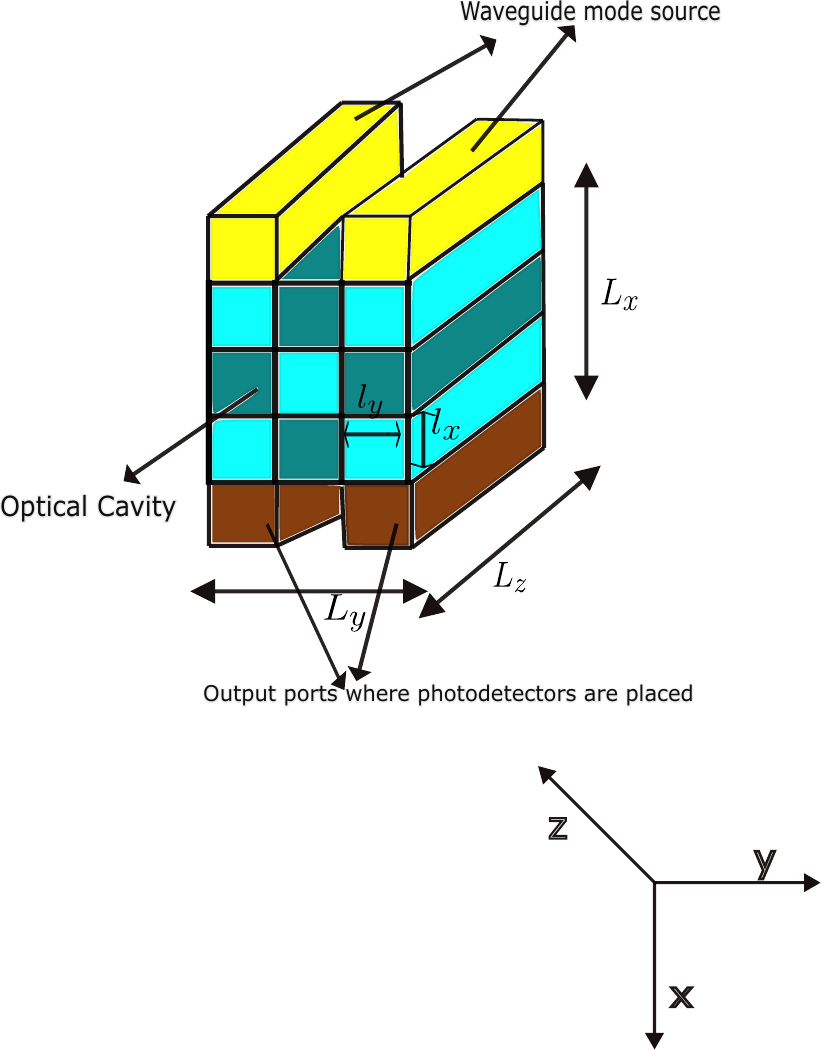}
\caption{3-D Cavity Matrix}
\label{fig:cavity_matrix}
\end{figure}

The waveguide mode source of Figure \ref{fig:cavity_matrix} is used to simulate light injection in the cavity according
to Equation \ref{eqn:TE_mode_excitation_coefficient}.    
 The cavity is represented as a 3-D matrix (Figure
\ref{fig:cavity_matrix}). 
 Suppose that we want to reconstruct an optical cavity of dimensions $(L_x, L_y, L_z)$ with
the mesh dimensions being $(l_x, l_y, L_z)$. The cavity is thus a 3-D matrix of dimensions $(\frac{L_x}{l_x},
\frac{L_y}{l_y},1 )$ where every entry decides the dielectric constant. 

If we restrict ourselves to two dielectric media $\epsilon_1, \epsilon_2 $, for every point in the grid i, $\epsilon_i=
\epsilon_1+p_i(\epsilon_2-\epsilon_1)$ is maintained, where $\epsilon$, which is the dielectric constant, is equal to $n^2$, where $n$ is the refractive index. Here, $p$ is the parameterization vector. If $p_i$ is $0$, $\epsilon_i$ becomes $\epsilon_1$. If $p_i$ is $1$, $\epsilon_i$ becomes $\epsilon_2$. However, assigning binary values to $p$ will mean adopting a discrete optimization-based approach. In order to be able to use powerful tools such as calculus mathematics, we let $p$ assume a continuous range of values in the range $[0,1]$. Thus, the parameter to be trained is $p \in  [0,1]^N,$ where N is the number of points in the grid\cite{vuck_invdes}. $\epsilon$ can be constructed from $p$ from the following equation.

\begin{equation}
    \epsilon= \epsilon_1 + (\epsilon_2-\epsilon_1)p
    \label{eqn:eps_from_p}
\end{equation}

\section{Optical Cavity Reconstruction Problem Statement}

In Equation \ref{eqn:optical_circuit_equation}, $HP(- \frac{\pi}{2})$ serves as the ideal gate operation expected to be performed by the reconstructed cavity. Let us represent the cavity in the form of a transfer function $M$ so that when two numbers, $x$ and $y$, encoded as current stimuli, are fed into the optical cavity, it produces the outputs $x'$ and $y'$, which represent the resultant electric field vector in the right and left output ports, respectively. It can be represented as follows.

\begin{equation}
    \label{eqn:optical_cavity_gate_operation}
   (x', y') = M(p,x,y) 
\end{equation}

Here, $x,y$ are two numbers in the range $[-1,1]$. $p \in [0,1]^N $ is the parameterization vector (Section \ref{sec:optical_cavity_setup}), which needs to be learnt from the measured data $x',y'$ (Equation \ref{eqn:optical_cavity_gate_operation}). Therefore, there are $N$ parameters to learn. They are reconstructed from the observed data $x',y' \in \mathbb{C}$. $M$ is an operator such that $[0,1]^N \times [-1,1]^2 \rightarrow \mathbb{C}^2$. 

Before being fed into the optical cavity, they are converted into the optical domain by using a source of current. The amplitude and direction of the current encode the magnitude and sign of a number, respectively (Equation \ref{eqn:TE_mode_excitation_coefficient}). The goal of the reconstruction is for $M$ to map the input $x,y$ to $x',y'$ such that $x'= \frac{1}{\sqrt{2}} (x+y)$ and $y'= \frac{1}{\sqrt{2}}i(x-y)$ serve as the target. 

Let us reconstruct $p$ by measuring $M$'s response to impulse signals as stimuli, which can be defined as Dirac delta functions (Equation \ref{eqn:dirac_delta}). This corresponds to the case where the inputs $x$ and $y$ are 1. The response to impulse signals will serve as the output of the cavity. 

The measure of fit, which determines how well $M$ fits the observed values with the target, can be defined as follows.

\begin{equation}
    \begin{split}
        f(p,M)= || (\frac{1}{\sqrt{2}},\frac{1}{\sqrt{2}}i)- M(p,1,0)||_2^2 + || (\frac{1}{\sqrt{2}},-\frac{1}{\sqrt{2}}i)- M(p,0,1)||_2^2
    \end{split}
    \label{eqn:measure_of_fit}
\end{equation}

Equation \ref{eqn:measure_of_fit} exploits the superposition principle of EM waves (Equation \ref{eqn:wave_eqn_inhomogeneous_media}) to treat the cavity outputs of two individual sources independently. Here, $||\vec{v}||_2$ is the $L_2$ norm of a complex-valued $\vec{v}$, which computes the square root of the summation of the squares of the absolute value of the individual components of the vector. $f$ is a $[0,1]^N \rightarrow \mathbb{R}_{\ge 0} $ mapping. 

Let us define the transformation $A$ such that it maps $p$ to the dielectric distribution $\epsilon$ (Equation \ref{eqn:eps_from_p}). $A$ is a $[0,1]^N \rightarrow [\epsilon_1, \epsilon_2]^N$ mapping. Let $B$ be a function that maps a number in the range $[-1,1]$ to a current stimulus $\vec{J}$ such that it is a mapping $[-1,1] \rightarrow \mathbb{C}^N$ (Equation \ref{eqn:efromj} shows that the matrices $\epsilon, E$ and $J$ should be of the same dimensions). Let $G$ be a function that maps $\vec{J}$ and $\epsilon$ to $\vec{E}$ using Equation \ref{eqn:efromj}. It is a $\mathbb{C}^N \times [\epsilon_1,\epsilon_2]^N \rightarrow \mathbb{C}^N$ mapping. Let us define a function $C_r$ and $C_l$ that map an electric field matrix generated by $G$ to a resultant vector in the right and left output ports, respectively. This is a $\mathbb{C}^N \rightarrow \mathbb{C}$ mapping and uses the superposition principle to superimpose the electric field vectors in the selected cross-section area of the output ports to generate a resultant vector. 

\begin{equation}
    \begin{split}
        \epsilon = A(p) \\
        J_1= B(x) \\
        J_2= B(y) \\
        E= G(J_1+J_2, \epsilon) \\
        x_1= C_r(E) \\
        x_2= C_l(E) 
    \end{split}
    \label{eqn:transformation}
\end{equation}

Let us define a unified function that maps $p, x, y$ to $x_1, x_2$ in Equation \ref{eqn:transformation}. Let us call it $O$, which is $O(p, x, y)=(C_r(G(B(x)+B(y),A(p))), C_l(G(B(x)+B(y),A(p))))$. It is a $[0,1]^N \times [-1,1]^2 \rightarrow \mathbb{C}^2$ mapping. 

Using Equation \ref{eqn:measure_of_fit} and $O$, let us mathematically define the reconstruction problem statement of $p$ along with the constraints on operator $M$. 

\begin{equation}
    \begin{split}
        min_p f(p,M), \\
        O(p,1,0) - M(p,1,0)= (0,0); \\
        O(p,0,1) - M(p,0,1)= (0,0)
    \end{split}
    \label{eqn:inverse_prob_statement}
\end{equation}

Let us define the cavity reconstruction problem as finding the distribution of $p \in \{0,1\}^N$ such that $f(p,M)$ of Equation \ref{eqn:inverse_prob_statement} becomes minimum. 

Let us prove that the problem statement is undecidable. The Halting Problem is a known $NP$ Hard problem. If given a machine $TM$ and input $w$, it decides whether $TM$ halts.

\begin{claim*}
    The Halting Problem can be reduced to the cavity reconstruction problem to show that the reconstruction is undecidable.
\end{claim*}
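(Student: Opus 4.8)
The plan is to give a computable reduction from the Halting Problem to the cavity reconstruction problem of Equation~\ref{eqn:inverse_prob_statement}, establishing the latter as at least as hard as the former and hence undecidable (in fact strictly stronger than the $NP$-hardness mentioned above). The first step is to pin down the precise decision version to reduce to: given a target gate behaviour and a set of admissible dielectric media $\{\epsilon_1,\epsilon_2\}$, decide whether there exists \emph{any} binary parameterisation $p$ for which $f(p,M)=0$, i.e.\ the cavity realises the target exactly. This existential formulation is the natural one, since $f\ge 0$ is a sum of squared norms (Equation~\ref{eqn:measure_of_fit}), so asking whether the minimum is $0$ is asking whether such a $p$ exists. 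Crucially, the grid size $N$ of Section~\ref{sec:optical_cavity_setup} must be treated as unbounded: for a fixed finite $N$ the search space $\{0,1\}^N$ is finite and the problem is trivially decidable by enumeration, so the undecidable question must range over the whole family of finite cavities.

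Next I would construct the reduction map $\Phi$ sending a Halting instance $(TM,w)$ to a reconstruction instance. The key observation is that the gate primitives already exhibited for the optical circuit — the directional coupler $H$ and the phase shifter $P(\theta)$ — together with the field-superposition and interference mechanics of the forward operator $G$ governed by Equation~\ref{eqn:efromj}, are computationally universal: arbitrary Boolean logic, and hence an arbitrary Turing-machine transition, can be laid out as a dielectric pattern $p$ that steers and interferes light so as to advance an encoded tape configuration. I would therefore let $\Phi(TM,w)$ specify a target whose exact realisation requires the cavity to carry out the entire computation of $TM$ on $w$ and to emit the dot-product interference pattern of Equation~\ref{eqn:optical_circuit_equation} (the $x+y$ and $i(x-y)$ components) only upon reaching a halting configuration.

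The two directions of the equivalence would then be argued as follows. For the forward direction, if $TM$ halts on $w$ in $T$ steps, the run visits finitely many configurations, so a finite grid of size computable from $T$ suffices to hard-wire the whole computation as a layout $p^\star$ with $G$ routing the light to the designated output ports, giving $f(p^\star,M)=0$ and meeting the constraints of Equation~\ref{eqn:inverse_prob_statement}. For the reverse direction, I would show that any $p$ with $f(p,M)=0$ forces the encoded computation to terminate: the interference conditions defining the target can be satisfied only when the light reaches the halting-state output, so a perfect reconstruction certifies a halting run. Composing the two, an algorithm deciding the reconstruction problem would decide Halting, which is impossible; this yields the claim.

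The hard part will be the faithful physical encoding and, above all, the reverse direction. The forward simulation only needs one explicit layout per halting computation, but ruling out \emph{spurious} solutions — dielectric patterns that drive $f$ to zero through some accidental interference coincidence without simulating a halting run — is delicate, because $G$ is a \emph{global} solve of Maxwell's equations and admits many field configurations. I would address this by making the target encoding rigid enough that only genuine computations satisfy it, for instance by demanding the correct output simultaneously across the family of input encodings $B(x)+B(y)$, and by invoking the linearity and superposition of Equation~\ref{eqn:wave_eqn_inhomogeneous_media} to decompose and individually control the contribution of each source. Making this rigidity argument airtight, so that no non-computational cavity can counterfeit the halting signature, is where the real work of the proof lies.
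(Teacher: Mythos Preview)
Your proposal takes a genuinely different route from the paper. The paper's argument is short and informal: it describes a Turing machine that enumerates all $2^N$ binary configurations of $p$, for each one ``checks if the Halting problem halts for an input $p$,'' evaluates $f(p,M)$ when it does, and concludes that if the minimiser is ever reported then Halting must have been decided. There is no explicit reduction map from a Halting instance to a reconstruction instance, and the direction of the reduction is left implicit. The paper also does not address the point you raise at the outset---that for any fixed finite $N$ the configuration space $\{0,1\}^N$ is finite and hence decidable by brute enumeration, so undecidability can only live in the unbounded-$N$ family of instances.

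Your plan, by contrast, follows the standard template for undecidability proofs: construct a computable map $\Phi$ from $(TM,w)$ to a reconstruction instance, argue that the optical primitives $H$ and $P(\theta)$ together with the forward operator $G$ are computationally universal, and verify both directions of the equivalence. This is the sounder skeleton, and it correctly isolates the decision version (does some $p$ achieve $f=0$?) as the target. Its weakness is the one you yourself name: the reverse direction---that no dielectric pattern can drive $f$ to zero without encoding a genuine halting run---is only sketched, and the proposed rigidity mechanism (demanding correct output simultaneously across a family of input encodings and invoking linearity of Equation~\ref{eqn:wave_eqn_inhomogeneous_media}) is not worked out. Since $G$ is a global Maxwell solve, ruling out accidental interference solutions is a substantial physical and mathematical claim in its own right. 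So your proposal is a more principled outline than the paper's brief paragraph, but the step you flag as ``where the real work lies'' is indeed unresolved and would need considerable additional argument to close.
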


\begin{proof}
    Let us define a formula that serves as an input to the Halting problem and transform it into an equivalent input to Equation \ref{eqn:inverse_prob_statement}. Let us say that the transformation has $N$ variables $ \{p_i\}_{i=1}^N | p_i \in \{0,1\}$. We use the Turing machine $TM$ to generate all possible permutations of $ \{p_i\}_{i=1}^N$, which are of the order $2^N$. The machine accepts a configuration of $p$ and checks if the Halting problem halts for an input $p$. If it does halt, $f(p,M)$ is evaluated and the Turing Machine moves on to the next configuration. The process continues until the machine reaches the last configuration, which is the final end state; otherwise, it loops forever. If $p$ where $f(p,M)$ attains its lowest value is reported, the Halting problem would have halted. 
\end{proof}

The reconstruction problem is thus undecidable. The optimal solution can only be approximated. As established in earlier sections, we let $p$ assume a continuous range of values $\in [0,1]^N$ to use calculus-based tools to discover optimal regions of $p$.

\section {Objective Function F($\epsilon$)} \label{sec:obj_fn}

Let $f$ be a function that accepts two complex numbers $x$ and $y$ such that $f(x,y)= ||(x-y)||^2= (x-y)^\dagger(x-y)$, 
where $||a+ib||= \sqrt{a^2+b^2}$ is the absolute value of any complex number $a+ib$.   
Since z is a free direction in our cavity configuration, we will restrict the design to the x-y plane; that is, the normal is
$\hat{z}$. The complex plane can be interpreted as an x-y plane; the complex vectors can be interpreted as vectors where the
complex number $a+ib$ can be viewed as a vector $[a,b]^T$ where $a$ is the projection of the vector along the $x$ axis and $b$ is the projection of the vector along the $y$ axis.

Let $x$ represent the target, while $y$ is the predicted value. 
Let $F$ be the cost function that needs to be minimized. 

Suppose $F(x_1,x_2,...,x_n,y_1, y_2,...,y_n)= \sum_{i=1}^n||x_i-y_i||^2 = \sum_i f(x_i,y_i)$. Let us show that minimizing $F$ achieves the optimal condition where the predicted values align with the target. 

The $y_i,$ which corresponds to the field vectors, is a result of Equation \ref{eqn:efromj}. 

Let us rebuild $f(x_i,y_i)$ in terms of $\epsilon(p(\vec{r})),$ where $\epsilon(p(\vec{r}))$ is real. 

\begin{equation}
    \vec{E}_i(\vec{r})=((\nabla \times \frac{1}{\mu}\nabla \times)- \omega^2\epsilon_0\epsilon(p(\vec{r}))^{-1}(i\omega J_i (\vec{r}))
    \label{eqn:gen_e}
\end{equation}

Here $J_i$ is the current density induced by the source $i$ and is treated as a 3-D matrix, $E_i$ is the electric field induced by the source $i$ and is represented as a 3-D matrix, and $\epsilon$ is the dielectric distribution, represented as a 3-D matrix. The matrices are of dimension $(N_x, N_y, 1)$, where $N_x$ and $N_y$ are the number of grid points along the $x$ and $y$ axes, respectively. 

We shall optimize the optical cavity with respect to the fundamental mode of propagation (the mode with the lowest cutoff frequency to exist). Let us define $c$ to denote the extent of the measured field overlap with the fundamental mode. The following set of equations measures it. Let us denote the measured electric and magnetic fields by $\vec{E}$ and $\vec{H}$, respectively. Let us denote the fundamental electric and magnetic fields by $\vec{E}_{mode}$ and $\vec{H}_{mode}$, respectively. Let $\hat{n}$ be the normal vector of the cross-sectional area $S$. Let $o$ denote the overlap integral. 

\begin{equation}
     o= \int_S ( \vec{E} \times \vec{H}_{mode} + \vec{E}_{mode} \times \vec{H}). \hat{n} dS
\end{equation}

$\vec{H}$ can be written in terms of $\vec{E}$ using Equation \ref{eqn:eh}.

\begin{equation}
    \begin{split}
         & \vec{H}(\vec{r}) = \frac{1}{i\omega \mu_0 } \nabla \times \vec{E}(\vec{r}) \\
        \therefore o & = \int_S \left (\vec{E} \times \vec{H}_{mode}+  \frac{1}{i\omega \mu_0 } \vec{E}_{mode} \times (\nabla \times \vec{E}) \right).\hat{n} dS \\
        & = \int_S \left (\vec{E} \times \vec{H}_{mode} - \frac{1}{i\omega \mu_0 } (\nabla \times \vec{E}) \times \vec{E}_{mode} \right).\hat{n} dS \\
        & = \int_S \left (\vec{E} \times \vec{H}_{mode} + \frac{1}{i\omega \mu_0 } (\vec{E} \times \nabla) \times \vec{E}_{mode} \right).\hat{n} dS \\
        &= \int_S \vec{E} \times \left (\vec{H}_{mode} + \frac{1}{i\omega \mu_0 } \nabla \times \vec{E}_{mode} \right).\hat{n} dS \\
        &= \int_S \vec{E} . \left ( \left (\vec{H}_{mode} + \frac{1}{i\omega \mu_0 } \nabla \times \vec{E}_{mode} \right) \times \hat{n} dS \right )  
    \end{split}
\end{equation}

Let us define $c$ in terms of an overlap vector. 

\begin{equation}
    \vec{c}= \left ( \left (\vec{H}_{mode} + \frac{1}{i\omega \mu_0 } \nabla \times \vec{E}_{mode} \right) \times \vec{S} \right )
    \label{eqn:overlap_vector_estimation}
\end{equation}

In the world of computation, $\vec{c}$ is treated as a 3-D matrix ($\vec{r}$ is simply a 3-D matrix representing the spatial distribution of points in the grid), which we call $c$. Using Equation \ref{eqn:overlap_vector_estimation}, $c^{\dagger} E$, where $E$ is the 3-D matrix representing $\vec{E}$, will produce a 3-D matrix that represents the measure of how much the measured fields align with the desired mode. Let us define a function $C$ that takes a matrix $M$ with complex entries as input and converts it into a complex number by adding all the entries of $M$. $C(c^{\dagger}E)$ generates a complex number of form $a+ib$ that, as previously established, can be treated as a vector $[a,b]^T$. The transformation holds because of the superposition property of EM waves. The matrix $c^{\dagger}E$ contains the distribution of the electric fields corresponding to the fundamental mode of propagation within the desired cross-sectional area. The resultant field is simply the sum of the fields spread throughout space, which is what the transformation $C$ does. $||C(c^{\dagger}E)||^2$ measures the power contained within the desired mode \cite{vuckovic_spins_arch}. 

$f$ becomes the following. 

\begin{equation*}
\begin{split}
    f(C(c^\dagger E_i), x_i) &= || C(c^\dagger E_i)- x_i||^2\\
    &=( C(c^\dagger E_i(\epsilon))- x_i)^\dagger( C(c^\dagger E_i(\epsilon))- x_i)
\end{split}
\label{eqn:f_defn}
\end{equation*}

\begin{figure}[H]
\centering
\includegraphics[width=0.5\linewidth]{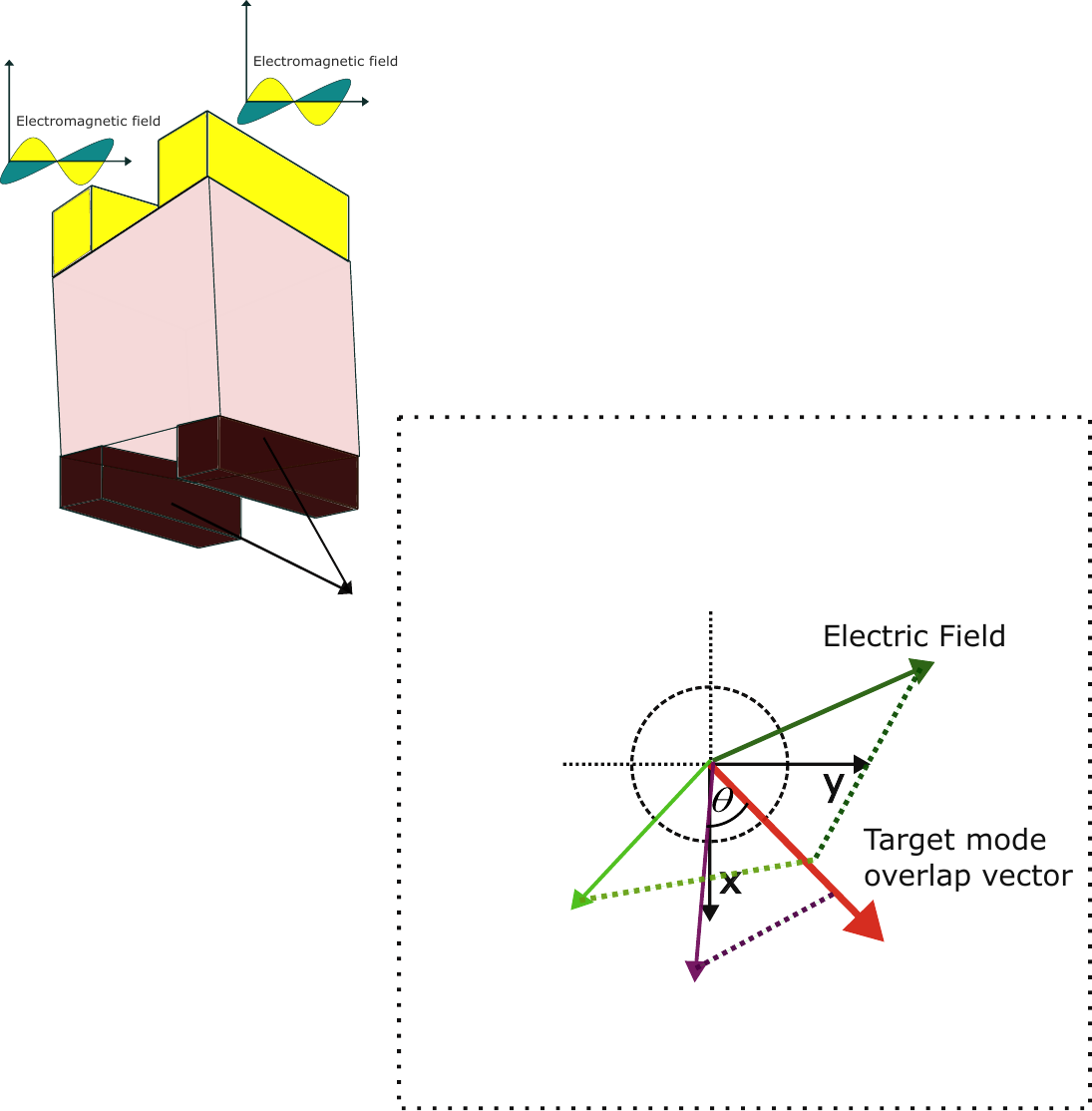}
\caption{Measuring electric field overlap with the target mode}
\label{fig:mode_overlap_measurement}
\end{figure}

In Figure \ref{fig:mode_overlap_measurement},  $\theta$ is the angle between the overlap vector for the target mode of propagation and the $x$ axis. Here, the target mode is the fundamental mode. The overlap vector can be calculated by
Equation \ref{eqn:overlap_vector_estimation}. The system involves two sources of light and two ports where photodetectors are placed
to measure the intensity of electric fields transmitted by the optical cavity. Let us call $\theta_{11}$ as the angle that the 
overlap vector makes with the $x-$ axis at the output port 1 (right output port in Figure \ref{fig:cavity_matrix}) when source 1 
produces the fields in the cavity; $\theta_{12}$ is the angle that the overlap vector makes with the $y-$ axis at the output port 2 (left output port in Figure \ref{fig:cavity_matrix})
when source 1 is the cause; $\theta_{21}$ is the angle that the overlap vector makes with the $x-$ axis at the output port 1 when source 2 is the cause; $\theta_{22}$ is the angle that the overlap vector makes
with the $y-$ axis at the output port 2 when source 2 is the cause.  

Mimicking the optical circuit principle in Equation \ref{eqn:optical_circuit_equation} in the domain of optical cavity, the
addition term "$x+y$"  can be replaced by constructive interference of electric fields generated by the two sources, encoding $x$ 
and $y$, at the right output port (output port 1) while the subtraction term "$x-y$" can be replaced by destructive interference
at the left output port. The objective function $F$ becomes the following. Let us say $\vec{\theta}$ is $[\theta_{11}, \theta_{12}, \theta_{21}, \theta_{22}]$.  

\begin{equation}
\begin{split}
     F_{\vec{\theta}}(\epsilon(p))=  f(C(c_{r}^{\dagger }E_1), \frac{1}{\sqrt{2}}e^{i\theta_{11}})+f(C(c_{r}^{\dagger }E_2), \frac{1}{\sqrt{2}}e^{i\theta_{21}})   + \\
f(C(c_{l}^{\dagger }E_1), \frac{1}{\sqrt{2}}e^{i\theta_{12}})+f(C(c_{l}^{\dagger }E_2), \frac{1}{\sqrt{2}}e^{i\theta_{22}})
\end{split}
\label{eqn:obj_function}
\end{equation}

Here, $\dagger$ is the conjugate transpose. 

Equation \ref{eqn:wave_eqn_inhomogeneous_media} shows that the resultant electric field due to multiple sources of light in the system can be treated as a summation of electric fields generated by the individual sources (since the superposition principle holds). This result is used to establish the objective function in Equation \ref{eqn:obj_function}. 

In Equation \ref{eqn:obj_function}, the term $\frac{1}{\sqrt{2}}$ appears to divide the power carried by the fields equally among the two output ports. The parameter $\vec{\theta}$ controls the phase of the electric fields: the difference $(\theta_{11}- \theta_{21})$ should be kept equal to even multiples of $\pi$ to steer constructive interference of $E_1$ and $E_2$ at the right output port. The difference $\theta_{12}-\theta_{22}$ should be kept equal to odd multiples of $\pi$ to cause destructive interference of $E_1$ and $E_2$ in the left output port. $c_l$ is the overlap vector for the left output port plane, where the normal of the plane is the $y-$ axis; $c_r$ is the overlap vector for the right output port plane, where the normal of the plane is the $x-$ axis. $p$ is the parameterization vector defined in Section \ref{sec:optical_cavity_setup}. 

\begin{figure}[H]
    \begin{subfigure}{0.3\textwidth}
        \includegraphics[width=\linewidth]{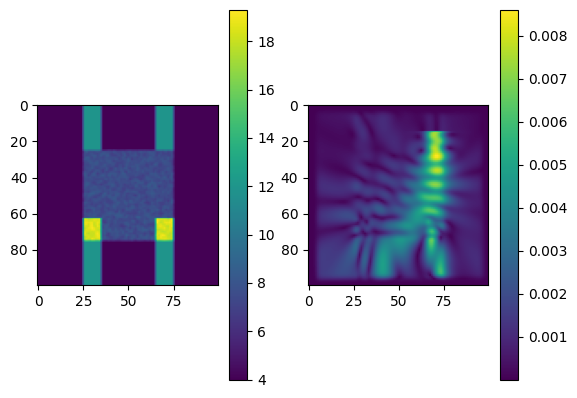}
        \caption{source-1}
    \end{subfigure}
    \hfill
    \begin{subfigure}{0.3\textwidth}
        \includegraphics[width=\linewidth]{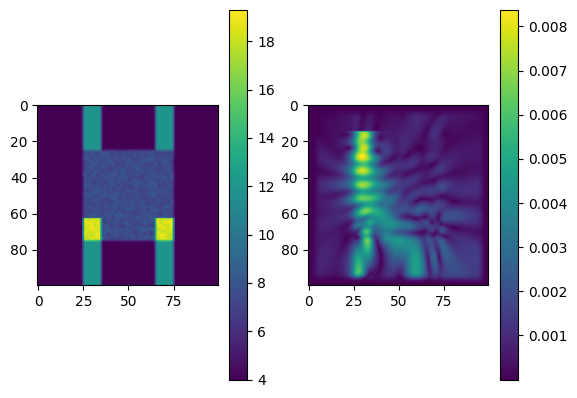}
        \caption{source-2}
    \end{subfigure}
  
    \caption{(a) source-1 on (b) source-2 on}
    \label{initial set-up}
\end{figure}

Figures \ref{initial set-up} a and b show electric fields induced by sources 1 and 2, respectively, which are used to
calculate the objective function $F$ in Equation \ref{eqn:obj_function}.  

Suppose $\vec{\theta}$ has been fixed (the target vectors have been determined), we denote $F_{\vec{\theta}}(\epsilon(p))$ by $F$ or $F(p)$ or $F(\epsilon)$.

Let us denote $C(c_j^{\dagger}E_i)$ by $C_{ij}$ and the corresponding target vectors by $x_{i,j}$, 
where $j$ $\in \{ r,l \}$ ($r$ stands for the right output port and $l$ for the left output port)  and $i \in \{1,2 \}$ (where $1$ stands for source-1 and $2$ for source-2). 

The cost function becomes the following.
\begin{equation}
    F(\epsilon(p))= \sum_{i\in{1,2}, j\in{r,l}}(C_{ij}-x_{i,j})^{\dagger}(C_{ij}-x_{i,j})
    \label{eqn:F_in_terms_of_overlap_vec}
\end{equation}

\begin{equation}
\begin{split}
    \frac{dF}{dp}= \sum_{i,j} ((C_{ij}-x_{i,j})^\dagger\frac{\partial C_{ij}}{\partial \epsilon}\frac{\partial \epsilon}{\partial p} + 
    (C_{ij}-x_{i,j})\frac{\partial C_{ij}^\dagger}{\partial \epsilon}\frac{\partial \epsilon}{\partial p}) 
\end{split}
    \label{eqn:cost_fn_gradient}
\end{equation}

$F$ achieves the lowest value when $\frac{dF}{dp}$ becomes $0$: we know that there is a minimum value for $F$
since $F$ is simply a sum of squares of residual errors (difference between predicted and target values); therefore, when the derivative of $F$ with respect to $\epsilon$ (design parameters to optimize) reaches $0$, it indicates that there is no scope for further improvement ($F$ ceases to change at that point). Section \ref{sec:gradient_based_opt_route} elaborates and discusses the gradient-based optimization route mentioned above.  


Equation \ref{eqn:obj_function} prompts the need to make wise selections of the phase term $\vec{\theta}$ to fix the alignment of the target electric fields in the output ports, where the photodetectors must be placed to measure the resulting optical magnitude. The following constraints apply to phases: $\theta_{21}= \theta_{11}$ (to cause constructive interference in the right output port) and $\theta_{12}= \theta_{22} \pm \pi$ (to cause destructive interference in the left output port). $\theta_{11}$ and $\theta_{12}$ can assume any values in the range $[0,2\pi]$. $\theta_{21}$ and $\theta_{22}$ will assume values according to the above constraints. We adopt a hill-climbing-based approach to make such decisions. 

We shall denote $\epsilon(p)$ by $\epsilon$ in later sections. 

\begin{stbox}
    The section establishes the objective function for our reconstruction problem. It defines a transformation $C$ that computes the resultant electric field vector contained within the defined cross-sectional area of the output port. Equation \ref{eqn:obj_function} uses the superposition principle established in Equation \ref{eqn:wave_eqn_inhomogeneous_media} that lets us treat EM waves from independent sources independently. $F_{\vec{\theta}}(\epsilon(p))$ depends on the phase information. The fundamental mode profiles of electric fields can assume any phase with respect to the $x$ or $y$ axis in the range $[0, 2\pi]$. Let us say $\theta_{ij}$ refers to the phase of the resulting electric field vector, calculated by $C$, from the source $i$ in the output port $j$. We treat $\theta_{11}$ and $\theta_{12}$ as tunable knobs that are free to accept any value in the range $[0, 2\pi]$, while $\theta_{21}$ and $\theta_{22}$ depend on $\theta_{11}$ and $\theta_{12}$, respectively, to steer constructive and destructive interferences in the right and left output ports. Once the phase information has been determined, the objective function $F$ can be evaluated to learn the dielectric distribution.  
    \end{stbox}
\section{Objective Function Optimization}


\subsection{Selecting $\vec{\theta}$: Hill Climbing-Based Search} \label{sec: hill climbing for phase selection}

Let us define an operator $H$ such that it takes $\vec{\theta}$ as input and produces the reconstructed $\epsilon$ along with a measure of fit, determined by $F_{\vec{\theta}}(\epsilon)$ of Equation \ref{eqn:obj_function}. The operator initializes $p$ with random values in the range $[0.3,0.7]$ and therefore does not need $p$ as input. It can be considered as a function $H: [0,2\pi]^4 \rightarrow [\epsilon_1,\epsilon_2]^N \times \mathbb{R}_{\ge 0}$. 

\begin{stbox}
    \textbf{Problem Statement}  \\
    Input: A function $H(\vec{\theta})$ that uses $\vec{\theta}$ to establish the target vectors in Equation \ref{eqn:obj_function} and generates the reconstructed $\epsilon$ that minimizes $F_{\theta}(\epsilon)$ along with the corresponding value of $F_{\theta}(\epsilon)$. \\
    Task: Find a $\vec{\theta} \in [0, 2\pi]^4$ that minimizes $F_{\theta}(\epsilon)$ produced by $H(\vec{\theta})$ and return the corresponding $\epsilon$. 
\end{stbox}

Let us assign the following definitions. Let $N$ be an arbitrary number belonging to the set of natural numbers and let $\delta$ be an arbitrary real number in the range $(0,1)$.  $\theta_{11}$ and $\theta_{12}$ are the two tunable knobs that can assume any real number in the range $[0, 2\pi ]$. Let us consider the $2\pi$ range as a circle centered at a point $C$; dividing the circle into $N$ arcs will result in each arc making an angle of $\frac{2\pi}{N}$ at the point $C$ (Figure \ref{fig:phase_search_circle}).   

\begin{figure}[!htb]
\centering
\includegraphics[width=0.8\linewidth]{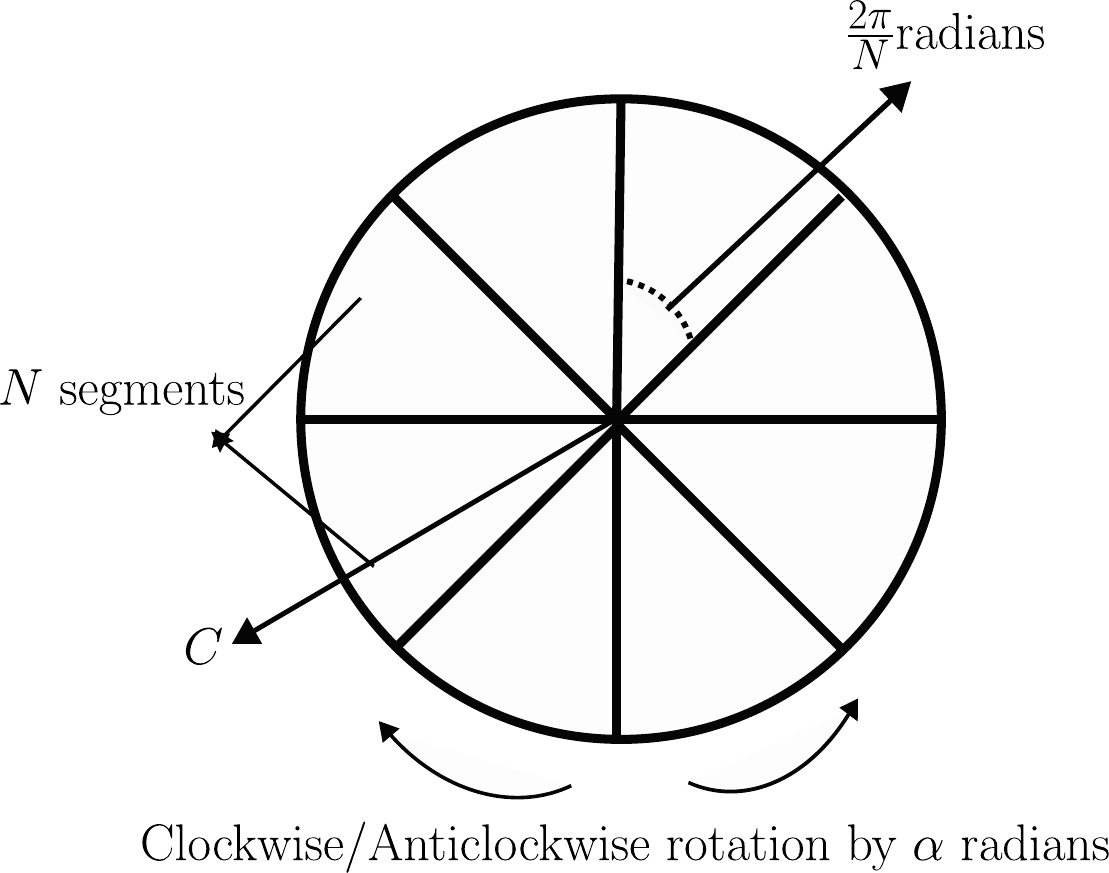}
\caption{Dividing $2\pi$ into $N$ segments}
\label{fig:phase_search_circle}
\end{figure}

In Figure \ref{fig:phase_search_circle}, $\alpha$ denotes the angle in radians that the circle must be rotated
clockwise/anticlockwise to reach some other point on the circle. $N$ segments constitute $2N$ points on the
circumference of the circle to explore. Let us call these points $ \{P_i\}_{i=1}^{2N}$. For every $P_i$, we launch a
hill-climbing-based search to look for nearby promising points by rotating the circle anticlockwise or clockwise by
$\alpha$. This parallelizes the search space, since the $P_i$s can be treated as independent starting points for hill
climbing that can be launched on multiple cores simultaneously. Moreover, it gives the hill climbing algorithm multiple starting points to avoid the issue of the search being stuck at a local minimum. 

\begin{algorithm}[!htb]
    \caption{Hill Climbing Based Search} \label{alg: hill climbing}
    \begin{algorithmic}
        \Require Function $H$; $N$ segments; $\alpha$; maximum iterations $M$; tolerance $\delta$; objective function $F_{\vec{\theta}}(p)$
        \Ensure For every segment, if the sequence of executed states, where a state is $[\epsilon,\vec{\theta}]$, at $j^{th}$ step is $\{q_i\}_{i=0}^{j},$ $F_{q_i[1]}(q_i[0]) < F_{q_{i-1}[1]}(q_{i-1}[0]) \forall i\ge 1$.
        
        \State $j \gets 0$ 
        \While {$j <  N$} \Comment{The loop can be executed in parallel}
        \State $\theta_{11} \gets \frac{2\pi}{N}j$
        \State $\theta_{22} \gets \frac{2\pi}{N}j$
        \State $\theta_{12} \gets \theta_{22}- \pi $
        \State $\theta_{21} \gets \theta_{11}$
        
        \State $i \gets 0$
       
        \State $\vec{\theta} \gets [\theta_{11}, \theta_{12}, \theta_{21}, \theta_{22}]$

        \State $\epsilon, \texttt{obj} \gets H(\vec{\theta})$
        \State $q \gets [\epsilon, \vec{\theta}]$

        \While{$i < M$}

         \State \text{Toss a coin}
         \If{\text{The coin shows heads}}
         \State $\theta_{11} \gets \theta_{11}+ \alpha$
         \Else 
         \State $\theta_{11} \gets \theta_{11} - \alpha$
         \EndIf

          \State \text{Toss the coin again}
         \If{\text{The coin shows heads}}
         \State $\theta_{12} \gets \theta_{12}+ \alpha$
         \Else 
         \State $\theta_{12} \gets \theta_{12} - \alpha$
         \EndIf

         \State $\theta_{21} \gets \theta_{11}$
         \State $\theta_{22} \gets \theta_{12} - \pi$

         \State $\vec{\theta} \gets [\theta_{11}, \theta_{12}, \theta_{21}, \theta_{22}]$

         \State $p, \texttt{obj}' \gets H(\vec{\theta})$
         \State $q' \gets [\epsilon, \vec{\theta}]$

         \If{$ absolute(\texttt{obj}- \texttt{obj}') < \delta $} \Comment{Check if the search has converged}
         \State \text{Convergence is reached}
         \State $break;$
         \EndIf
         
         \If{$ \texttt{obj}' < \texttt{obj}$}
         \State $q \gets q'$ \Comment{Accept this state}
         
         \EndIf
         
        \State $i \gets i+1$
        
        \EndWhile

        \State $j \gets j+1$
        \EndWhile

    \Return \text{state $q$ which yields least $F_{q[1]}(q[0])$}
    
    \end{algorithmic}
\end{algorithm}

In Algorithm \ref{alg: hill climbing}, $N$ is assigned to $48$ as we had 48 cpu cores, $\alpha$ to 0.01 as we did
not want to keep the precision of $\theta$ beyond two decimal places, $M$ to $3$ (the total number of iterations is
$M*N$, which is $144$) and $\delta$ to $10^{-6}$. $2\pi \approx 6.28$. Since $\alpha$ remains at $0.01$, the total number of
points to explore is $628$. $\frac{2\pi}{48}$ is $0.13$ radians. Each of the $48$ segments covers an angle of $0.13 $ radians.
Therefore, the angles are covered at an interval of $0.13$ radians. Furthermore, at every point $P_i,$ the hill climbing
search can rotate the circle in figure \ref{fig:phase_search_circle} by $\alpha$, which is $ 0.01$ radians, anti-clockwise or
clockwise, giving us $\frac{0.13}{\alpha}$, which is 13 points to explore on either side of $P_i$.  The optimal point (or local
minimum), when $P_i$ is the starting point, must lie on one of the sides of the starting point. At the starting point,
$3$ attempts are made to explore the neighbourhood($M=3$). The probability that an attempt traverses the side where the
optimal solution lies (assuming that the starting point does not lie on a local minimum) is $\frac{1}{2}$. One of the $13$
points is the optimal point in the neighborhood of the starting point. The probability that the optimal point is indeed
reached in $M$ (which is equal to 3) iterations is equivalent to the probability that the optimal neighbor is among the top $M$ (which is 3) nearest neighbours of
$P_i$, which is P(optimal point is the nearest neighbour) + P(optimal point is the $2^{nd}$ nearest neighbour) + P(optimal point
is the $3^{rd}$ nearest neighbour) $= (\frac{1}{2})\frac{1.12!}{13!} + (\frac{1}{2})^2\frac{1.12!}{13!} +
(\frac{1}{2})^3\frac{1.12!}{13!} = 0.067$. 

\begin{stbox}
    The section presents a hill-climbing based selection approach to determine phase profile. In the interest of time, the process first divides the search space covering $2\pi$ radians into $2N$ segments, where $N$ is the number of available cores, which can be used to launch simulations simultaneously. Given that $N$ is 48, the angles at an interval of $0.13$ radians are covered. Furthermore, at each of the $2N$ points, hill-climbing is used as a local search to cover angles at an interval of $0.01$ radians. The number of iterations for the local search was kept low. The probability of the search finding the optimal point in the radius of $0.13$ radians with values precise up to the second decimal point is $0.067$. 
\end{stbox}

\subsection{Dielectric Distribution Reconstruction} 

Let us establish the function $H$ that is used to construct $\epsilon$ from a given $\vec{\theta}$ using the objective function of Equation \ref{eqn:obj_function}. Here, $\vec{\theta}$ is a known quantity. In order to mathematically define the construction of $H$, we define two functions $X: [0, 2\pi]^4 \rightarrow [0,1]^N \times \mathbb{R}_{\ge 0}$ and $Y: [0,1]^N \times \mathbb{R}_{\ge 0} \rightarrow [\epsilon_1, \epsilon_2]^N \times \mathbb{R}_{\ge 0}$. $X$ takes $\vec{\theta}$ as input, initializes $p$ with random values $\in [0.3,0.7]^N$ and learns $p$ through the gradient of $F_{\vec{\theta}}(p)$ with respect to $p$ (Section \ref{sec:gradient_based_opt_route}). $X$ returns the reconstructed $p$ and the corresponding value of $F$. Using $p$ returned by $X$, $Y$ converts $p$ to $\epsilon(p)$ and learns the grid indices where symmetry can be injected (Section \ref{sec:search_tree}) and returns perturbed $\epsilon$ if the value of $F(\epsilon)$ improves or else returns the $\epsilon(p)$ constructed by $X$ along with the corresponding $F(\epsilon)$. The role of $Y$ is to explore the effect of symmetry on the trapping of light. In terms of $X$ and $Y$, $H$ is $ YX(\theta) = H(\theta)$. 

\section{Gradient-Based Learning} \label{sec:gradient_based_opt_route}

In Section \ref{sec:optical_cavity_setup}, the definition of the parameterization vector $p$ was established, which controls the distribution of refractive indices in the cavity. 

The principal idea behind the reconstruction is to update $p$ through $p : p- \alpha * \frac{dF}{dp}$, where $\alpha$ is the learning rate, and $F$ is the objective function of Equation \ref{eqn:obj_function}. 

Let us recall Equation \ref{eqn:cost_fn_gradient}. In the equation, $\frac{\partial C_{ij}}{\partial \epsilon}$ is $C(c_j^\dagger \frac{\partial E_i}{\partial \epsilon})$.

Let us understand how $ \frac{\partial E_i}{\partial \epsilon}$ is computed, which involves the adjoint method \cite{adjmeth}. 

Let $g(\epsilon)$ be $((\nabla \times \frac{1}{\mu}\nabla \times)- \omega^2\epsilon_0\epsilon) E + (-i\omega J) = 0$ 
be the function that solves $E$ matrix for an epsilon distribution $\epsilon$ and current $J$. Let $f(\epsilon)$ be the objective function that represents the function $f$ in Equation~\ref{eqn:f_defn}, which depends on $\epsilon$ as the $E_i$ matrices in the equation are generated from the $\epsilon$ matrix (see Equation \ref{eqn:gen_e}). 

Since $g(\epsilon)$ is $0$, therefore $d_\epsilon g$ is $0$. 

 \begin{equation*}
     \begin{split}
           \implies g_EE_\epsilon +g_\epsilon = 0 \\ 
           \therefore E_\epsilon = -g_E^{-1}g_\epsilon \\ 
           d_\epsilon f = \partial_Ef\partial_\epsilon E \\
          =f_EE_\epsilon \\ 
           \therefore d_\epsilon f = -f_Eg_E^{-1}g_\epsilon
     \end{split}
 \end{equation*}

Let us define an equation that connects $g_E$ with $f_E$. Let us define $\lambda$ such that $g_E^T\lambda=-f_E^T$. This implies $\lambda^T= -f_E g_E^{-1}$. Here, $\lambda$ is called a matrix of adjoint variables. $d_\epsilon f$ becomes $\lambda ^Tg_\epsilon$. This is called the adjoint equation.


Here, $g$ simulates the fields that flow through the dielectric distribution. The question now arises as to how perturbations
in the dielectric media $\partial \epsilon$ affect g, which, in turn, affects the flow of the field. Let us see how $\frac{df}{dp}$ is calculated \cite{giles2000introduction}. 

\begin{equation*}
    \begin{split}
        g_E= ((\nabla \times \frac{1}{\mu}\nabla \times)- \omega^2\epsilon(p))^T 
    \end{split}
\end{equation*}

\begin{equation}
    ((\nabla \times \frac{1}{\mu}\nabla \times)- \omega^2\epsilon_0\epsilon(p)) (-i\omega \lambda) = i\omega f_E^T
\end{equation}

$f_E^T$ can be treated as the source $\vec{J}$ in Equation \ref{eqn:efromj}, while the calculated fields $\vec{E}$ correspond to $-i\omega\lambda$. 

\begin{equation*}
    g_\epsilon= -\omega^2 \epsilon_0 E 
\end{equation*}

\begin{equation}
    \begin{split}
        \implies d_\epsilon f &= \lambda^T (-\omega^2\epsilon_0 E) \\
         &= -\omega^2 \epsilon_0\lambda^T E
    \end{split}
\end{equation}

$d_pf= d_\epsilon f d_p \epsilon$ can thus be computed from $d_{\epsilon}f$, which is not more difficult to calculate than solving Equation \ref{eqn:efromj}. 

The problem involves reconstructing the $\epsilon$ distribution ($p$ vector) such that it can generate $E$ through $g$ that minimizes
$F$. Essentially, we need to find $\epsilon(p_{opt})$ that serves as a global minimizer of $F(\epsilon(p))$. The key idea behind
finding the minimizer is Newton's method \cite{mit_ocw_newtons_method}. Suppose that we start with an initial guess $p_0$. Let us
use Taylor's series to approximate $F(\epsilon(p))$ around the point $p= p_0.$ 

\begin{equation}
\begin{split}
    F(\epsilon(p))= F(\epsilon(p_0)) + \nabla F(\epsilon(p_0))^T (p - p_0) \\
    + \frac{1}{2}(p-p_0)^T \nabla^2 F(\epsilon(p_0)) (p-p_0) 
\end{split}
    \label{eqn:Taylor_approx_of_F}
\end{equation}

$\nabla^2 F$ is the Hessian of $F$ and can be denoted by $H$. Let us define $\Delta p= p-p_0$. Let us analyze the gradient of $F$ using Equation \ref{eqn:Taylor_approx_of_F}. $\nabla F(\epsilon(p))= \nabla F(\epsilon(p_0)) + H(p_0) (p-p_0) $. Plugging
$\nabla F =0$ will fetch the direction of minimization: $p= p_0 - \frac{\nabla F(\epsilon(p_0))}{H(p_0)}$. $\Delta p= p-p_0= -\frac{\nabla F(\epsilon(p_0))}{H(p_0)}$ is called the Newton direction. Large problems that involve a large number of variables to optimize mean that the Hessian matrices become difficult or bulky to compute. L-BFGS-B algorithm is recommended to handle bulky optimization problems, such as in our case \cite{lbfgs}.  

An important part of the reconstruction now becomes choosing the direction along which $p$ must be updated (keeping in mind that the search space can involve local minima and saddle points). Suppose $\alpha$ represents the parameter
that controls the rate of change of $F$ along $p$: $\Delta p '= \alpha \Delta p, $ where $\alpha \in [0,1]$. 

A symmetric matrix with positive eigenvalues is said to be positive definite. If the Hessian of $F$, $\nabla^2 F,$ is positive definite, Newton direction $\Delta p$ will eventually lead to the global minima of $F$. The explanation is as follows. Let $A$ be a square matrix. Suppose $A$ is positive definite, which means that $A$ must be symmetric: $A= A^T$. $\nabla ^2 F$ (or $H$) is ensured to be a symmetric matrix. Moreover, a positive definite matrix has positive eigenvalues (a symmetric matrix has real-valued eigenvectors): for any $x$ belonging to the class of real-valued vectors, $x^\dagger A x >0$. Assuming that $H$ also fulfills this condition, $F$ is a convex function \cite{cs229_cvxopt_convex}. $\Delta p^\dagger H(p_0) \Delta p = - \nabla F(\epsilon(p_0))^T \Delta p > 0 $, which implies $ \nabla F(\epsilon(p_0))^T \Delta p < 0$. This means that the Newton direction is a direction of descent \cite{nocedal2006numerical}. Due to the convexity of $F,$ a local minima will correspond to global minima. 

Since optimization problems such as ours do not ensure a convex Hessian ($H$ may not be a positive definite matrix, thus leading to saddle and multiple local minima points), $H$ is approximated to produce a positive definite matrix. Let us list the necessary conditions. L-BFGS-B belongs to the Quasi-Newton category, where the Hessian matrix is updated at every iteration rather than being calculated from scratch. Let us begin with Equation \ref{eqn:Taylor_approx_of_F}. Suppose that in an iteration $i$, the parameterization vector $p$
is indicated by $p_i$. Let $\alpha$ at this point be denoted by $\alpha_i$. $p$ is updated by the following equation.

\begin{equation*}
    p_{i+1}= p_i + \alpha_i \Delta p_i
\end{equation*}

Here, $\Delta p_i= -\frac{\nabla F(\epsilon(p_i))}{H(p_i)}$, which can be written as $-H(p_0)^{-1}\nabla F(\epsilon(p_0))$. Suppose that we wish to estimate the $H$ matrix for $p_{i+1}$ \cite{nocedal2006numerical}. It will be used to construct the following. 

\begin{equation*}
\begin{split}
     F(p_{i+2})= F(p_{i+1}) + \nabla F(p_{i+1})^T (p_{i+2}-p_{i+1}) \\ 
     + \frac{1}{2}(p_{i+2}-p_{i+1})^T H(p_{i+1}) (p_{i+2}-p_{i+1})
\end{split}
\end{equation*}

Let us perform $p_i - p_{i+1}= - \alpha_i \Delta p_i$. 

\begin{equation}
    \begin{split}
        \nabla F(p_{i})= \nabla F(p_{i+1}) + H(p_{i+1}) (-\alpha_i \Delta p_i) \\\implies H(p_{i+1}) \alpha_i \Delta p_i = \nabla F(p_{i+1}) - \nabla F(p_i)
    \end{split}
\end{equation}
 
Thus, Hessian can be computed at every step just from the gradient of the objective function $F$ subject to the following constraint. 
Let us call ``$\alpha_i \Delta p_i$'' $v_i$ and ``$(\nabla F(p_{i+1}) - \nabla F(p_{i}))$'' $y_i$. 

\begin{equation}
    H(p_{i+1})v_i = y_i
\end{equation}

To ensure $H$ is a positive definite matrix, $v_i$ and $y_i$ must satisfy the following.

\begin{equation}
    v_i ^T H(p_{i+1}) v_i = v_i ^T y_i > 0 
    \label{eqn:curvature_condition}
\end{equation}

The search of $\alpha$ is influenced so that the constraint of \ref{eqn:curvature_condition} is satisfied. The Wolfe
conditions state that $\alpha$ in every iteration must ensure that a significant decrease is observed in the objective function. 
The decrease can be quantified by the following inequality in Equation \ref{eqn:significant_decrease_in_F_condition} \cite{linesearch}. Let us define a constant $c_1 \in (0,1)$.

\begin{equation}
\begin{split}
      F(p_{i+1}) \leq F(p_i) + c_1 \nabla F(p_{i})^T\alpha_i \Delta p_i 
\end{split}
\label{eqn:significant_decrease_in_F_condition}
\end{equation}

$c_1$ is kept as $10^{-4}$ \cite{linesearch}. The second condition (\ref{eqn:termination_condition_for_line_search}) terminates
the search in the direction $\alpha_i\Delta p_i$ if it does not produce any further decrease in $F$. 

\begin{equation}
    \nabla F(p_{i+1}) ^T \alpha_i \Delta p_{i} \ge c_2 \nabla F(p_i)^T \alpha_i \Delta p_i
    \label{eqn:termination_condition_for_line_search}
\end{equation}
$c_2$ is assigned $0.9$ \cite{linesearch}. It has been proved that an $\alpha$ will always exist that satisfies the Wolfe conditions. 
Suppose $F$ is convex. The constraint of Equation \ref{eqn:curvature_condition} is inherently satisfied by $F$. 
The reason is as follows. The first order of convexity condition of convexity is the following \cite{wang2021convexity}.

\begin{equation}
    F(y) \ge F(x) + \nabla F(x)^T (y-x) \forall x,y \in \text{domain of $F$}
    \label{eqn:first_order_convexity_rule_of_F}
\end{equation}

Using Equation \ref{eqn:first_order_convexity_rule_of_F}, we can show the following.

\begin{equation}
    F(p_{i+1}) \ge F(p_i) + \nabla F(p_i)^T (p_{i+1}-p_i)
    \label{eqn:convex_rl1}
\end{equation}

\begin{equation}
    F(p_{i}) \ge F(p_{i+1}) + \nabla F(p_{i+1})^T (p_{i}-p_{i+1})
    \label{eqn:convex_rl2}
\end{equation} 

 Adding Equations \ref{eqn:convex_rl1} and \ref{eqn:convex_rl2} results in $(p_{i+1} - p_i )(\nabla F(p_{i+1})- \nabla F(p_i)) \ge 0$, which satisfies Equation \ref{eqn:curvature_condition}. In view of the possibility that $F$ is non-convex, enforcing Wolfe conditions (Equation \ref{eqn:termination_condition_for_line_search}) on the line search will ensure that Equation \ref{eqn:curvature_condition} is met (see Equation \ref{eqn:enforce_wolfe}). We shall use the following results established earlier: $c_2 < 1$; when $F$ is convex at the point $p_i$, $\nabla F(p_i)^T \Delta p_i < 0$.

 \begin{equation}
     \begin{split}
         \nabla F(p_{i+1}) ^T \alpha_i \Delta p_{i} & \ge c_2 \nabla F(p_i)^T \alpha_i \Delta p_i \\
         \implies (\nabla F(p_{i+1}) -\nabla F(p_i))^T (p_{i+1}-p_i) & \ge \alpha_i(c_2-1) \nabla F (p_i) ^T \Delta p_i \\ & > 0
     \end{split}
     \label{eqn:enforce_wolfe}
 \end{equation}

\begin{algorithm}[!htb]

    \caption{ BFGS Algorithm } \label{alg:lbfgs-b}
    \begin{algorithmic}
    \Require parametrisation vector $p$, objective function $F$
    \Ensure \text{Minimize $F(\epsilon(p))$ such that $0 \leq p \leq 1$}
    \State \text{Step 1: Choose the starting vector $p_0$ }
    \State $p_0 \gets \text{vector of random values}$
   \State $B_0 \gets I$ \Comment{Initial inverse Hessian assigned identity matrix}
    \State $i \gets 0$ \Comment{Iteration number}
    
    \While{\text{Convergence reached}}

    \State \text{Step 2: Compute the gradient of $F$ at $p_i$}
    \State \text{Step 3: Perform Backtracking Line Search \cite{linesearch}  to find $\alpha_i$} \Comment{Fulfils Wolfe conditions; uses only gradient information}
    
    \State \text{Step 4: Compute search direction and Hessian at $p_{i+1}$ }   \Comment{to perform updates on $p$}  
    
    \If{$i = 0$} 

    \State $\Delta p_i \gets - B(p_i)\nabla F(p_i)$ \Comment{$B=$ inverse Hessian}
    \State $p_{i+1} \gets p_i + \alpha_i \Delta p_i$
    \State $v_i \gets p_{i+1} - p_i$
    \State $y_i \gets \nabla F(p_{i+1}) - \nabla F(p_i)$
    
    $B_i \gets \frac{y_i^T v_i}{y_i^T y_i} I$ \Comment{Reset $B_0$ for further updates \cite{quasi_newton}}
    
    \EndIf

     \State $\Delta p_i \gets - B(p_i)\nabla F(p_i)$ \Comment{$B=$ inverse Hessian}
    \State $p_{i+1} \gets p_i + \alpha_i \Delta p_i$
    \State $v_i \gets p_{i+1} - p_i$
    \State $y_i \gets \nabla F(p_{i+1}) - \nabla F(p_i)$
    \State $\rho_i \gets \frac{1}{y_i^Tv_i}$
    \State $B_{i+1} \gets (I- \rho_i v_i y_i^T)B_i (I- \rho_i y_i v_i^T) + \rho_i v_i v_i^T$ \Comment{Updating inverse Hessian as per BFGS formula \cite{quasi_newton}}

    \State $i \gets i+1$
    
    \EndWhile
    \end{algorithmic}
\end{algorithm}

\section{Symmetry Injection: Monte Carlo Search Tree} 
\label{sec:search_tree}
\subsection{Injecting Periodicity}

Let us say that the z direction sees translation symmetry. Let us define a translational operator $\hat{T_d}$ such that $\hat T_d \epsilon(r)=\epsilon(r-d)=\epsilon(r)$. It can be observed that a mode with a function of the form $e^{ikz}$ is an eigenfunction of any translational operator in the z direction. 

\begin{equation*}
    \hat T_{d\hat z}e^{ikz}=e^{ik(z-d)}=(e^{-ikd})e^{ikz}
\end{equation*}
  
 If the system experiences translational symmetry in 3 directions, the modes of EM waves are of the form $\vec{H}_k(\vec{r})=\vec{H_0}(\vec{r})e^{i\vec{k}.\vec{r}}$, where $\vec{H_k}(\vec{r})$ is the electromagnetic mode, $\vec{H_0}(\vec{r})$ is a periodic function such that $\vec{H_0}(\vec{r})= \vec{H_0}(\vec{r} + \vec{R}) (\vec{R}$ is the vector that defines where the structure becomes periodic) and $\vec{k}$ is the wave vector. This is known as Bloch's theorem \cite{photonic_crystal}. Suppose that the system has a continuous translation symmetry in the $x$, $y$ directions, the in-plane vector $\vec{k}$ becomes $k_x\hat x+k_y\hat y$. Bloch rotations and mirror shifting in the dielectric distribution can introduce Bloch phase shifts in the path of light, giving direct control over the fields. We propose a symmetry introduction method in the distribution of dielectric media in the following way.  

 Let us treat the cavity matrix $\epsilon$ as a 2D matrix $M$ with dimensions $n,m$ since we treat the third dimension $z$ as free. Let us say axes 0 and 1 represent the $x$ and $y$ axes, respectively. We shall be perturbing the matrix along these two axes. Let us say that we mirror shift the cavity matrix by 1 unit along the $x$ axis. The process replaces $M_{i,j}$ with $M_{i+1,j} \forall 0 \le i \le n-2$. In row $i= n-1$, it retains the elements originally present in row $i= n-2$. Here, $n \ge 2$ is expected. Similar logic follows for mirror shift along axis 1 (or the $y$ axis). Suppose that we Bloch shift by 1 unit along axis 0 and rotate the matrix by $\frac{\pi}{2}$, the process replaces $M_{i,j}$ with $M_{i+1,j} \forall 0 \le i \le n-2$. In row $i= n-1$, it places the elements in row $i=0$ after multiplying the elements by $e^{-i\frac{\pi}{2}}$. For the purpose of illustration, we present the following examples. 
 
 \begin{enumerate}
     \item Mirror shift: Suppose that there is a matrix $M=$
    $\begin{pmatrix}
         1 & 2 \\
         2 & 1 \\
         3 & 7
     \end{pmatrix}$. Mirror shifting $M$ by +1 along axis 0 would mean transforming $M$ to $M'$ where $M'$ is  $\begin{pmatrix}
         2 & 1 \\
         3 & 7 \\
         3 & 7
     \end{pmatrix}$.
     
     \item Bloch shift and rotate: Bloch shifting $M$ by +1 along the axis 0 with Bloch phase $\frac{\pi}{2}$ would mean transforming $M$ to $M'$ where $M'$ is $\begin{pmatrix}
         2 & 1 \\
         3 & 7 \\
         -2i & -1i
     \end{pmatrix}$.
 \end{enumerate}

\subsection{Search Tree Construction}

Let us construct a search tree to navigate the cavity indices where symmetry must be introduced. Let us say that the cavity matrix $M$ has dimensions $(N_x, N_y, 1)$. Here, the cavity matrix is initialized by $\epsilon(p)$, where $p$ is produced by the function $X(\vec{\theta})$. Let the root node begin with a matrix of 1 by 1 in the middle of the cavity, which is $M [\left \lfloor{\frac{N_x}{2}} \right \rfloor -1: \left \lfloor{\frac{N_x}{2}} \right \rfloor, \left \lfloor{\frac{N_y}{2}} \right \rfloor -1: \left \lfloor{\frac{N_y}{2}} \right \rfloor, 1 ]$. The search tree gets constructed from here. At every node, the expansion space includes the following operations on the stored cavity indices.

\begin{enumerate}
    \item Shift by -1, 0, or 1 to left 
    \item Shift by -1, 0, or 1 to right
    \item Shift by -1, 0, or 1 up
    \item Shift by -1, 0, or 1 down
    \item Apply Bloch phase along axis 0 ($x$-axis) and shift either by 1 or -1
    \item Apply Bloch phase along axis 1 ($y$-axis) and shift either by 1 or -1
    \item Mirror shift along axis 0 and shift either by 1 or -1
    \item Mirror shift along axis 1 and shift either by 1 or -1
\end{enumerate}

Bloch phases are determined by the following. 
\begin{enumerate}
    \item If axis 0 is picked, the Bloch phase becomes the vector [$\frac{2\pi}{\lambda}l_x$,0,0].
    \item If axis 1 is picked, the Bloch phase becomes the vector [0,$\frac{2\pi}{\lambda}l_y$,0], where $l_x$ is the length of the selected cavity region along axis 0 and $l_y$ is the length of the selected cavity region along axis 1. 
\end{enumerate}

The forbidden moves encompass indices that fall beyond the bounds of the matrix dimensions and the situation where no cells are shifted at all (shift by 0 in all four directions). These two constraints halt the tree expansion to arrive at the leaf nodes. The number of possible moves at every step is $(3^4-1)*(2^3+1)= 720$. 

\subsection{Tree Traversal}

We use a Monte Carlo Tree Search-based approach to arrive at an optimal selection of moves. Monte Carlo trees have four components \cite{swiechowski2022monte}, which are described in the following. 

\begin{enumerate}
    \item Selection: The best move (least costly move) is selected based on a selection policy, which is the maximization of the Upper Confidence Bound (Equation \ref{eqn:ucb}), if the node has already explored all the possible moves in earlier iterations; otherwise, move to the expansion step.
    \item Expansion: This step expands the node with a randomly chosen move out of all the permitted moves and moves to the simulation step.
    \item Simulation: This step continues to expand the search space using randomly chosen legal moves, evaluates the objective function, and ends the simulation by reaching the terminal state, proceeding to the backpropagation step.
    \item Backpropagation: This step involves traversing backward to reach the root node and updating the visited nodes with their cost and the number of visits along the way. 
\end{enumerate}

Each visited node is assigned a dielectric distribution matrix based on the perturbations performed, which is used to calculate the objective function of Equation \ref{eqn:obj_function}. The value of the objective function serves as the cost of the node. The goal of the tree traversal is to eventually arrive at the ``least costly'' node. In the selection phase, the best move is picked on the basis of the ``Upper Confidence Bound'', which is defined as follows.

\begin{equation}
    -\frac{\frac{1}{1 + \exp(-\text{node.value})}}{\text{node.visits}} + \text{exploration\_weight} \cdot \sqrt{\frac{\log(\text{parent.visits})}{\text{node.visits}}}
    \label{eqn:ucb}
\end{equation}

Here, node.value carries the accumulated costs (by backpropagation) that result from the evaluation of the objective function $F(\epsilon)$ for the perturbed $\epsilon$ stored by the node; node.visits carries the number of times the node has been visited, while parent.visits stores the visits made to the parent of the current node in previous iterations. The simulation reaches the terminal state when it consumes the maximum number of iterations, or the change in the objective function drops below a threshold, or there are no further moves to make (leaf node reached). Once the terminal state is reached, the cost and the number of visits made to the node are back-propagated. Here, the exploration weight is kept as $\sqrt{2}$ \cite{swiechowski2022monte}. 

\begin{algorithm}[!htb]
\caption{Monte‑Carlo Tree Search for Objective‑Function Minimisation through Symmetry Injection}
\label{alg:mcts}
\begin{algorithmic}[1]
\Require Root node $n_0$; iteration budget $N$; exploration weight $c$ (default $\sqrt{2}$)
\Ensure  Child of $n_0$ that yields the minimum objective value

\Function{MCTS}{$n_0,\,N,\,c$}
    \State $n_0.\mathrm{root} \gets n_0$
    \For{$i \gets 1$ \textbf{to} $N$}                       \Comment{\textbf{Main loop}}
        \State $n \gets \Call{Selection}{n_0,c}$               \Comment{Step 1}
        \State $n \gets \Call{Expansion}{n}$                   \Comment{Step 2}
        \State $z \gets \Call{Simulation}{n}$                  \Comment{Step 3}
        \State \Call{Backpropagation}{$z$}                     \Comment{Step 4}
    \EndFor
    \State \Return \Call{TraceBestMove}{$n_0$}
\EndFunction
\end{algorithmic}
\end{algorithm}


\begin{algorithm}[!htb]
\caption{Node‑level function calls}
\label{alg:node_ops}
\begin{algorithmic}[1]
\Require Node $n$; exploration weight $c$; threshold $\varepsilon$ (default $10^{-5})$
\Function{Selection}{$n,\,c$}
    \While{$n$ is fully expanded \textbf{and} $n$ has children}
        \State $n \gets \Call{BestChild}{n,c}$
    \EndWhile
    \State \Return $n$
\EndFunction

\Function{BestChild}{$n,\,c$}
    \Comment{Tweaked UCB1 – minimize objective}
    \ForAll{child $n_i \in n.\mathrm{children}$}
        \State $\textstyle
            w_i \gets
            -\dfrac{\sigma(n_i.\mathrm{value})}{n_i.\mathrm{visits}}
            \;+\;
            c\sqrt{\dfrac{\ln n.\mathrm{visits}}{n_i.\mathrm{visits}}}$
    \EndFor
    \State \Return child with \textbf{max} $w_i$
\EndFunction

\Function{Expansion}{$n$}
    \If{$n$ is \textbf{not} fully expanded}
        \State Randomly pick $a\in$ legal actions of $n.\mathrm{state}$
        \State $s' \gets n.\mathrm{state}.\Call{take\_action}{a}$
        \State Create child $n'$ with state $s'$, parent $n$, move $a$
        \State Append $n'$ to $n.\mathrm{children}$
        \State \Return $n'$
    \Else
        \State \Return $n$
    \EndIf
\EndFunction

\Function{Simulation}{$n$}   \Comment{Random playout with convergence checks}
    \State $s \gets n.\mathrm{state}$;\; $v\_0 \gets s.\Call{evaluate}{}$;\; $t\gets 0$
    \Repeat
        \If{no legal actions in $s$}\Comment{leaf}
            \State \Return $n$
        \EndIf
        \State Draw random action $a$ and update $s \gets s.\Call{take\_action}{a}$
        \State $v \gets s.\Call{evaluate}{}$,\; $t\gets t+1$
    \Until{$|v - v_0| < \varepsilon$ \textbf{or} $t = t_{\max}$}
    \State \Return new terminal node holding $s$
\EndFunction

\Procedure{Backpropagation}{$z$}
    \While{$z \neq \mathtt{null}$}
        \State $z.\mathrm{visits} \gets z.\mathrm{visits}+1$
        \State $z.\mathrm{value} \gets z.\mathrm{value} + z.\mathrm{state}.\Call{evaluate}{}$
        \State $z \gets z.\mathrm{parent}$
    \EndWhile
\EndProcedure

\Function{TraceBestMove}{$n_0$}
    \State \Return child of $n_0$ (recursively) with minimum stored value
\EndFunction
\end{algorithmic}
\end{algorithm}

\section{Results} \label{sec: results}  

For evaluation purposes, we keep $\epsilon_1$ as $1.5$ and $\epsilon_2$ as $3.45$.

Figure \ref{fig:cost_vs_parameter_theta} shows the evolution of the function $H(\vec{\theta})$, where $\theta_{11}$ and $\theta_{12}$ are independent parameters. 

\begin{figure}[!htb] 
\centering
\includegraphics[width=1\linewidth]{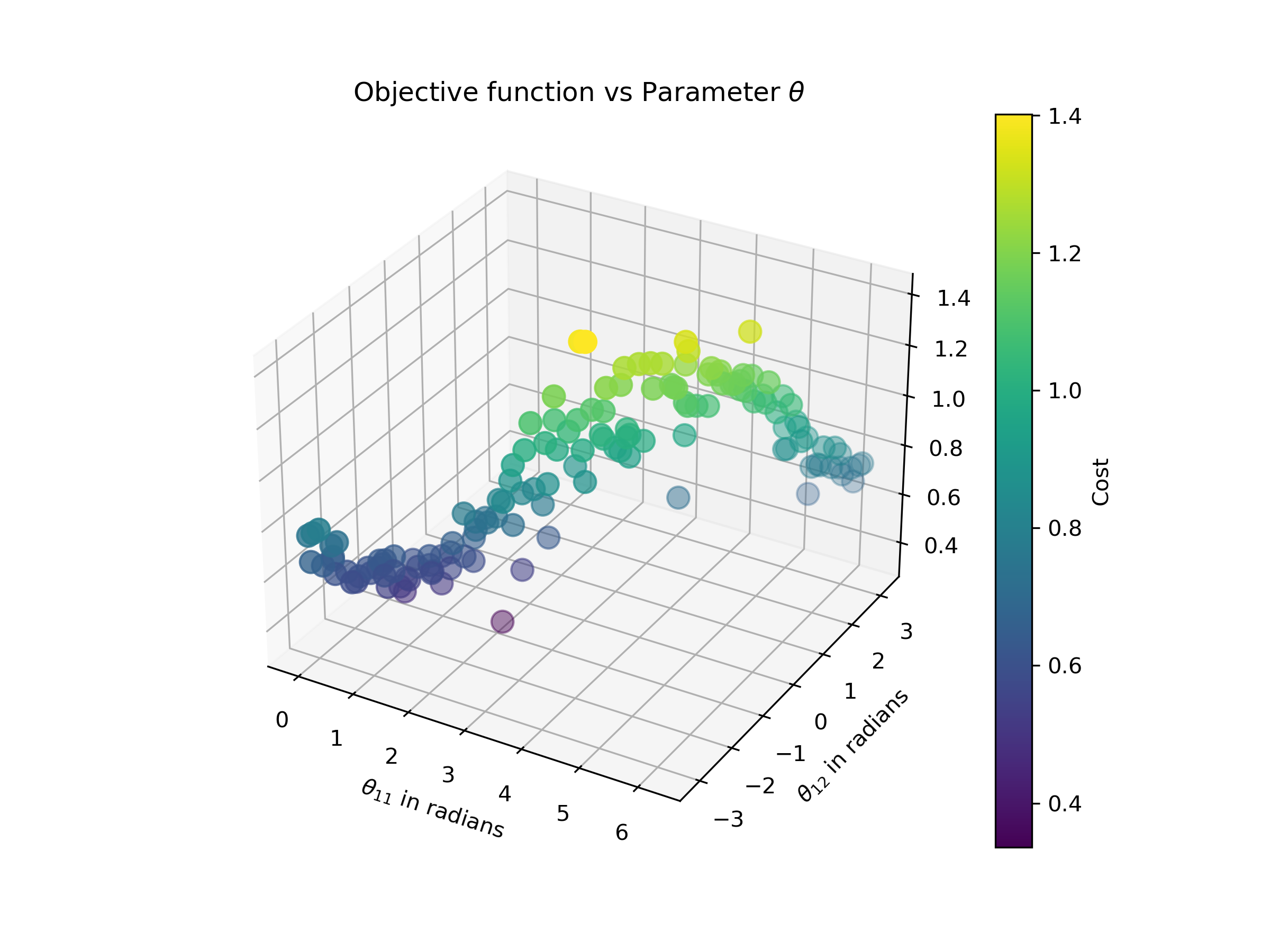}
\caption{Effect of Parameter $\theta$ on Objective Function}
\label{fig:cost_vs_parameter_theta}
\end{figure}

\begin{figure}[!htb] 
\centering
\includegraphics[width=1\linewidth]{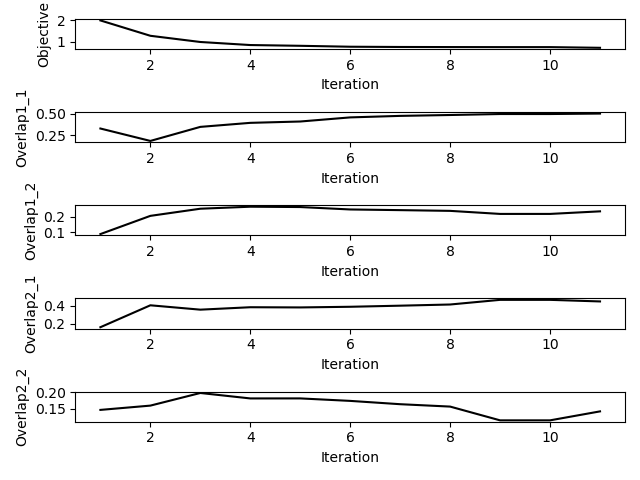}
\caption{Cost Optimisation via Gradient-Based Learning}
\label{fig:grad_descent_optimisation}
\end{figure}

Figure \ref{fig:grad_descent_optimisation} shows the gradient-supported cost optimization $X(\vec{\theta}),$ where $\vec{\theta}$ is $[2.19, -1.14, 2.19, 1.99],$ which is $\vec{\theta}$ where $H(\vec{\theta})$ achieves its minima (Figure \ref{fig:cost_vs_parameter_theta}). The results begin to converge within 20 iterations. In the figure, the terms ``Overlap$i\_j$'' refer to the term $||C_{ij}||^2$ in Equation \ref{eqn:F_in_terms_of_overlap_vec}.

\begin{figure}[!htb]
\centering
\includegraphics[width=1\linewidth]{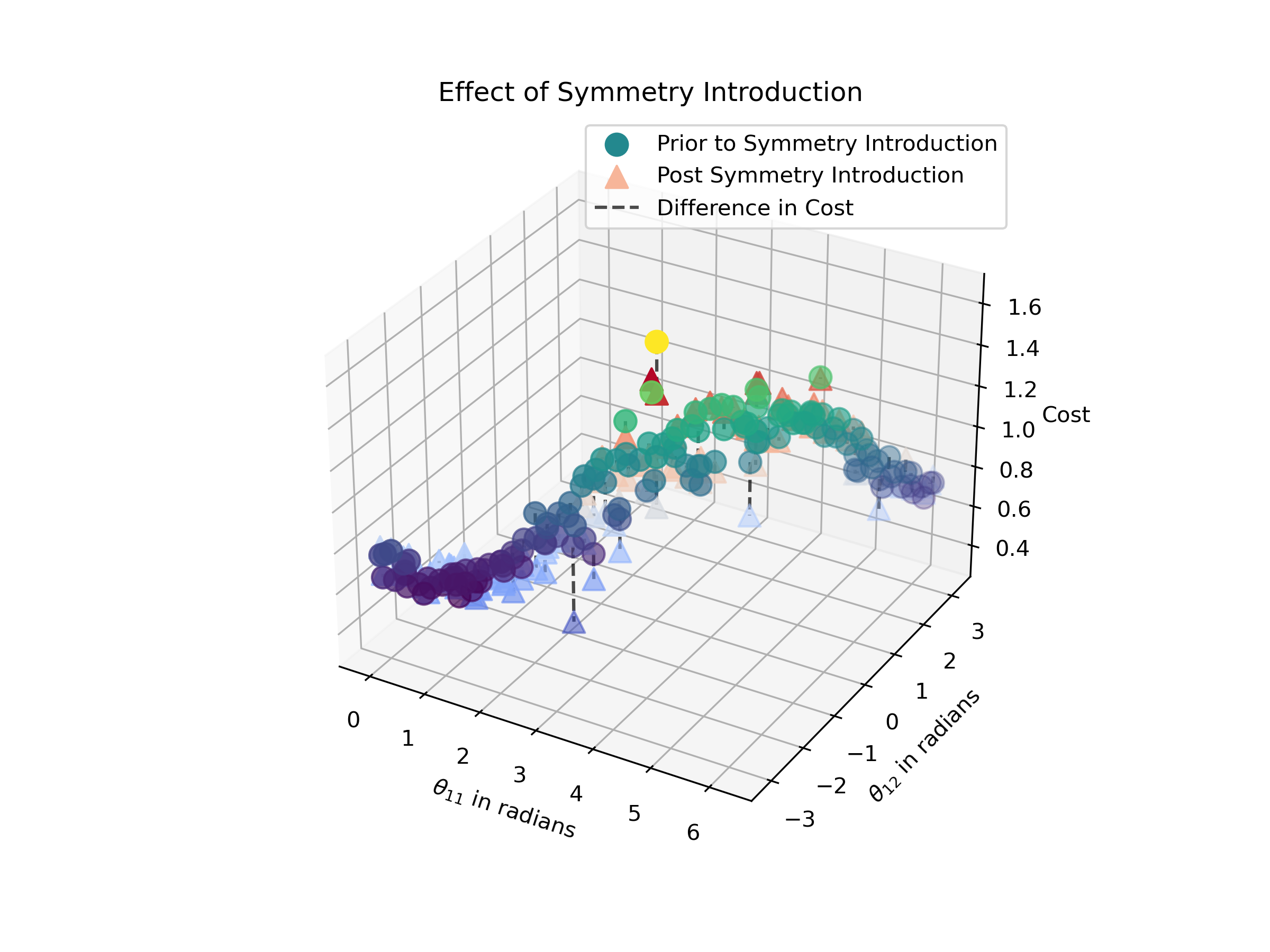}
\caption{Effect of Symmetry Introduction vs Parameter $\theta$}
\label{fig:symmetry_intro}
\end{figure}

Figure \ref{fig:symmetry_intro} shows the degree of improvement introduced into the system by introducing symmetry into the cavity. At $\vec{\theta}$ $=[2.19, -1.14, 2.19, 1.99]$, the cost is reduced to 0.335 from the initial cost of 0.714 achieved by gradient-based learning. Figures \ref{fig:unflipped} and \ref{fig:flipped_fields} show light propagation in the cavity before and after injection of symmetry. 

\begin{figure}[!htb]
\centering
\includegraphics[width=1\linewidth]{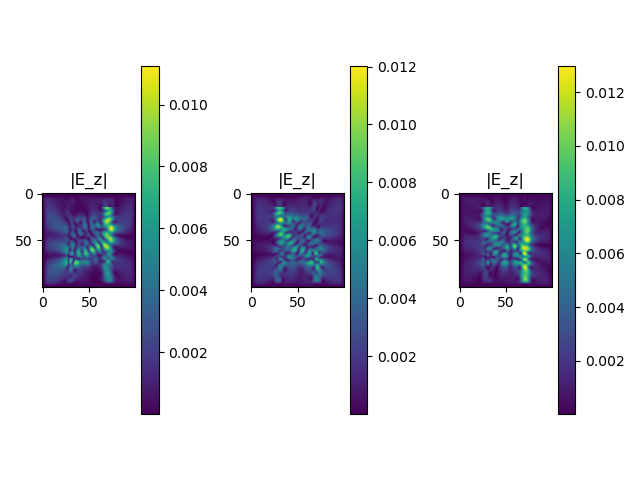}
\caption{Fields Before Symmetry Introduction }
\label{fig:unflipped}
\end{figure}

\begin{figure}[!htb]
\centering
\includegraphics[width=1\linewidth]{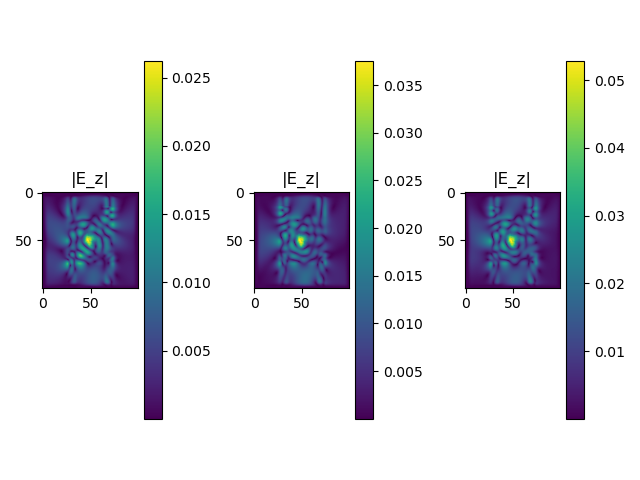}
\caption{Fields Corresponding To Cavity Post Symmetry Introduction }
\label{fig:flipped_fields}
\end{figure}

Let us consider $x$ and $y$ as the two given numbers ($\in [-1,1]$) whose product will be calculated ($=xy$). Taking into account $J$ as means to encode data, the dielectric distributions in the cavity were optimized keeping the two sources $J_1 $ and $J_2$ as unit impulse signals (putting it in terms of $x$ and $y:$ $x=1, y=1$). 
The cavity response to the unit impulse signals is taken as the output of the cavity. The most optimized version of the optical cavity showed the following response.

\begin{table}[!htb]
\centering
\begin{tabular}{lllll}
\toprule
\textbf{Name} & \textbf{Value} & \textbf{Amplitude} & \textbf{Phase}(in radians) \\
\midrule
$C_{11}$ & $-0.2055 + 0.5280i$ & 0.5666 & 1.9419 \\
$C_{12}$ & $0.1427 - 0.2978i$ & 0.3302 & -1.1238 \\
$C_{21}$ & $-0.3934 + 0.5078i$ & 0.6423 & 2.2299 \\
$C_{22}$ & $-0.0356 + 0.3636i$ & 0.3653 & 1.6684 \\
\bottomrule
\end{tabular}
\caption{Optical cavity response to impulse signal}
\label{tab:cavity_response}
\end{table}
Equation \ref{eqn:TE_mode_excitation_coefficient} is used as the recipe for encoding numbers $x,y$ in terms of the amplitude of the source current densities and the direction of propagation (forward/backward propagation). From Table \ref{tab:cavity_response}, the cavity response can be generalized in terms of $x,y$.

The fields in the right output port can be generalized as
\begin{equation}
    (-0.2055 + 0.5280i)\,x + (-0.3934 + 0.5078i)\,y
    \label{eqn:general_op1}
\end{equation}

The fields in the left output can be generalized as follows.

\begin{equation}
    (0.1427 - 0.2978i)\,x + (-0.0356 + 0.3636i)\,y
    \label{eqn:general_op2}
\end{equation}

Using Equations \ref{eqn:general_op1} and \ref{eqn:general_op2}, the current response through the photodetectors placed in the output ports can be generalized as follows.  

\begin{equation}
    \begin{split}
        \quad 
& \left\lVert (-0.2055 + 0.5280i)\,x + (-0.3934 + 0.5078i)\,y \right\rVert^2 \\
& \quad - \left\lVert (0.1427 - 0.2978i)\,x + (-0.0356 + 0.3636i)\,y \right\rVert^2 \\
= \; & 0.19\,x^2 + 0.16\,y^2 + 0.8\,xy \;\propto\; xy
    \end{split}
    \label{eqn:predicted_output_current}
\end{equation}

\begin{figure}[!htb]
\centering
\includegraphics[width=1\linewidth]{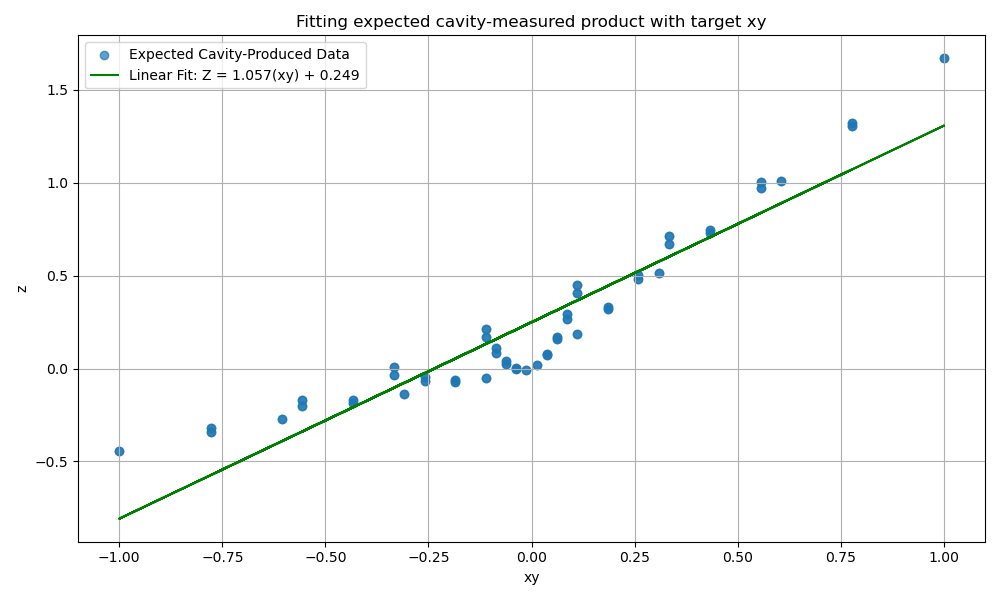}
\caption{Measuring the constant of proportionality between the expected output current produced by cavity and the target product $xy$}
\label{fig:proportionality_constant}
\end{figure}

In Figure \ref{fig:proportionality_constant},  ``z'' denotes the expected output current, generated by the optical cavity, that is measured by the photodetectors in Equation \ref{eqn:predicted_output_current}.  If we were to estimate how well the expected cavity-produced currents align with the target $xy$, the ordinary least squares method finds the relation $z= 1.057 xy + 0.249$ 
with $R^2=0.88.$ The $R^2,$ known as the coefficient of determination, lies in $[0,1]$. Since the calculated $R^2$ is close to 1, 
the estimated relation between $z$ and $xy$ can be called a good fit. Thus, the expected cavity response can be said to be directly proportional to $xy$ (as previously predicted by Equation \ref{eqn:predicted_output_current}).  

\begin{figure}[!htb]
\centering
\includegraphics[width=1\linewidth]{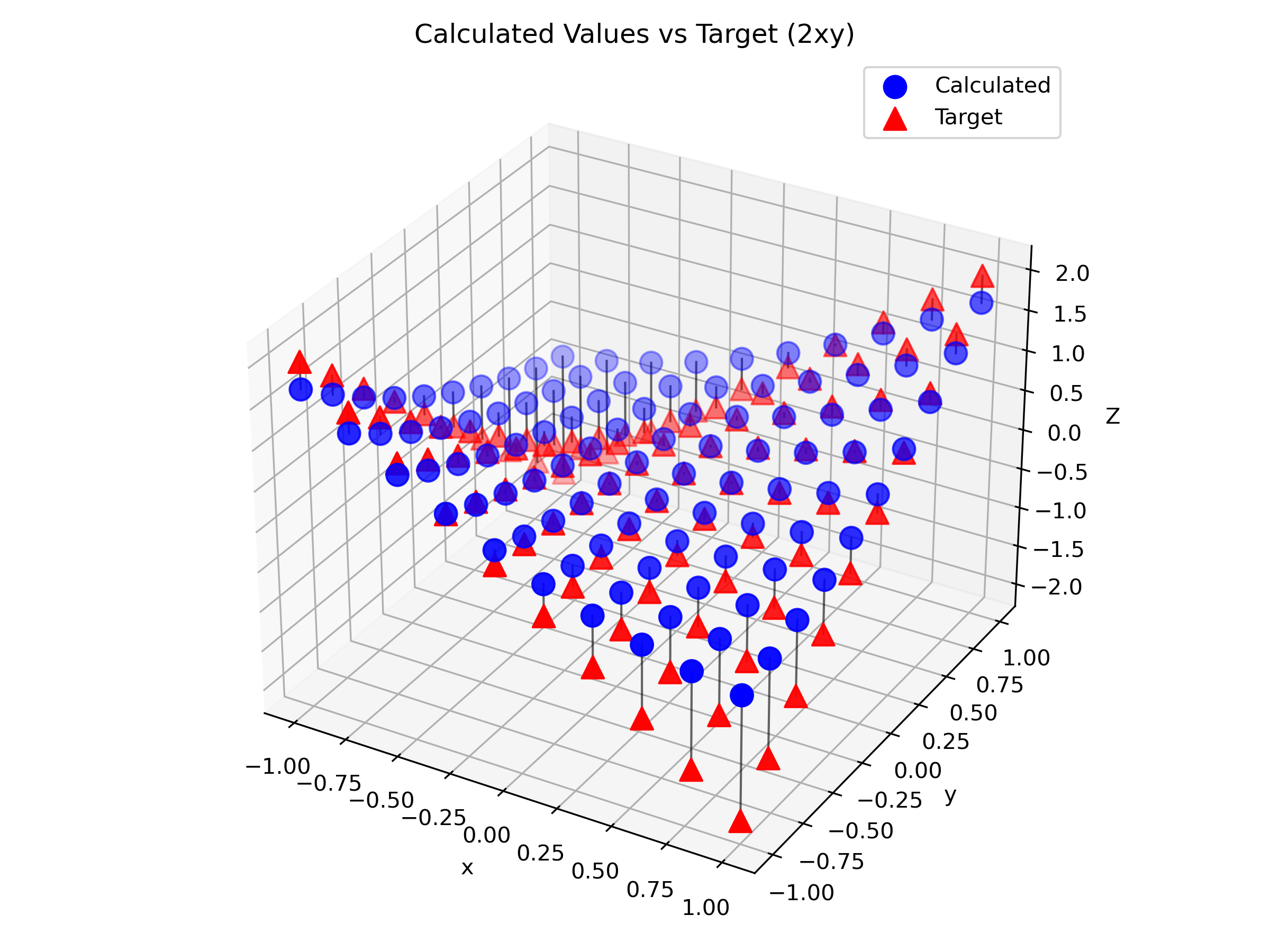}
\caption{Measuring the contrast between the target ($2xy$) and the expected cavity-produced product of $x,y$}
\label{fig:target_vs_predicted_xy}
\end{figure}

The target value is $2xy$. Figure \ref{fig:target_vs_predicted_xy} shows the comparison between the expected output current produced by the cavity (from equation \ref{eqn:predicted_output_current}) and the target value. In addition, Figure~\ref{fig:target_vs_measured_xy} shows the differences between the expected output of the cavity and the actual output. In the figure, the target is the expected value by solving equation \ref{eqn:predicted_output_current} for $x,y$.
The cavity response achieves perfect alignment with the expected response. 

\begin{figure}[!htb]
\centering
\includegraphics[width=1\linewidth]{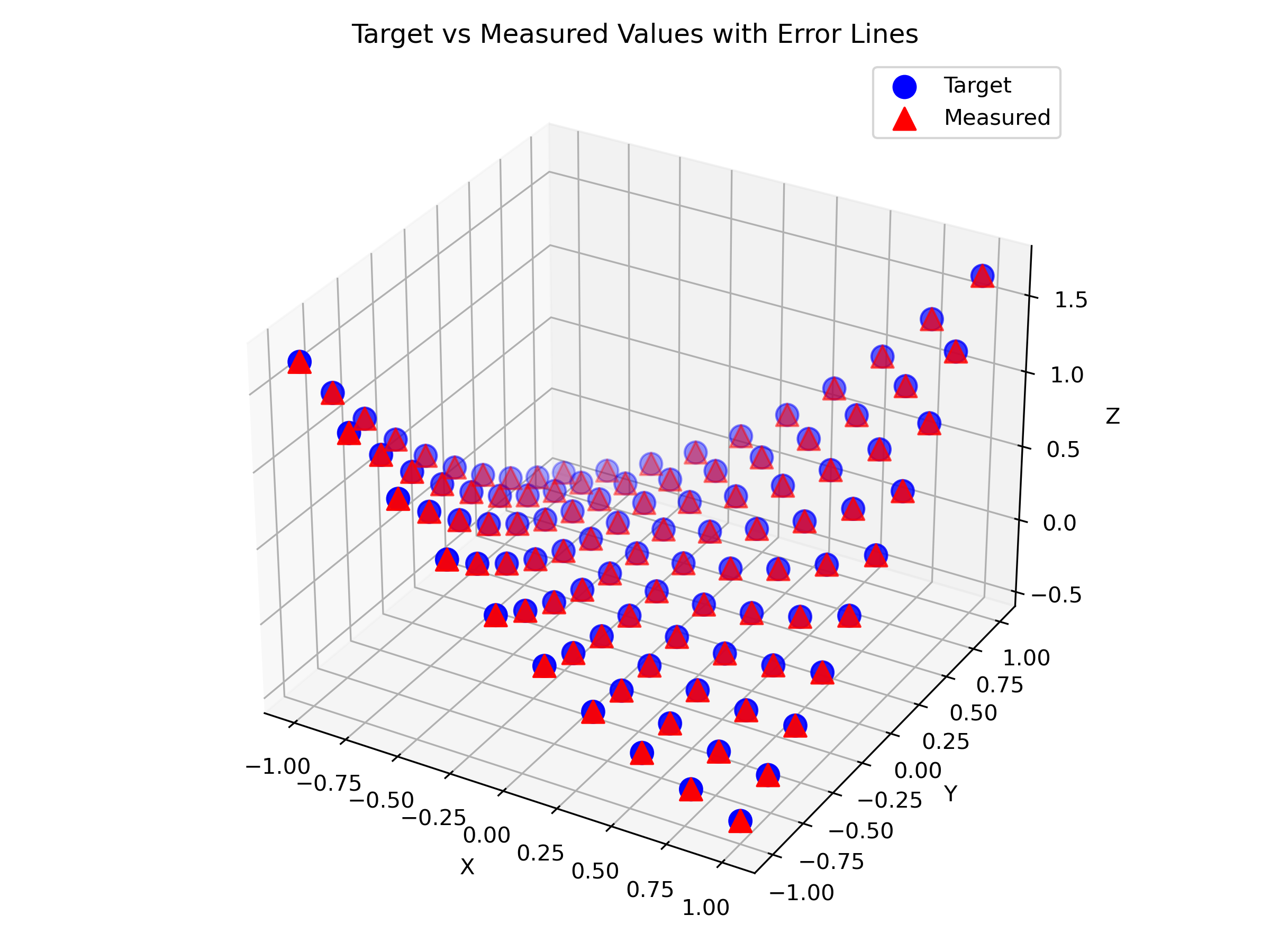}
\caption{Measuring the output current with changing source current densities}
\label{fig:target_vs_measured_xy}
\end{figure}

The \href{https://github.com/zhuhanqing/Lightening-Transformer}{LT} simulator \cite{zhu2023lighteningtransformerdynamicallyoperatedopticallyinterconnectedphotonic} 
is used to obtain simulation results of photonic accelerators. In Tables \ref{tab:area_sota_vs_our_model}, \ref{tab:energy_sota_vs_our_model} and \ref{tab:power_sota_vs_our_model}, 
we show the gains that can be achieved in terms of area, power, and energy if the photonic core of the latest state of the art ``Lightening Transformer'' were to be replaced by the proposed optical cavity, which has a footprint of $2 \times 0.22 \mu m^2$. The optical cavity replaces phase shifters, directional couplers, MZIs, and the Y-branch. Tables \ref{tab:area_sota_vs_our_model} and \ref{tab:power_sota_vs_our_model} show the simulation results for the hardware core with $4$ tiles and $2$ clusters per tile. In Lightening Transformer, photonic core occupies 18.76\% of the total area; upon introduction of the optical cavity, the photonic core ends up occupying only 2.69\% of the total area. The total area reduces to $50.36 mm^2,$ seeing a reduction of $16.5 \%$ over previous accelerators. Lasers, which previously consumed $5.22 \%$ of the total power, accounted for $4.05 \%$ of the total power when the optical cavity-based 
core is used, thus reducing the total power consumption by $1.22 \%$. 

Table \ref{tab:energy_sota_vs_our_model} calculates the energy and latency for DeiT that uses 197 tokens and operates at a precision of 4 bits. The optical cavity-based core consumes $0.88 \%$ less energy. Latency remains unaffected. This is in part due to the assumption that standard photonic devices, such as phase shifters and beam splitters, have a $0$ response time. Thus, 
replacing them with an optical cavity does not have an effect on latency. 

\begin{table*} [!htb]
    \centering
    \caption{Reduction in area when replacing existing dot product engine with inverse-designed optical cavity of dimensions (2 $\mu m$, 0.22 $\mu m$ , 2 $\mu m$)}
    \label{tab:area_sota_vs_our_model}
 
\begin{tabular}{rrrrrrr}
\toprule
\multicolumn{1}{r|}{Component} & \multicolumn{2}{r|}{MZI-based accelerator} & \multicolumn{2}{r|}{Lightening Transformer} & \multicolumn{2}{r}{Optical cavity-based photonic core} \\
\hline
& Area($mm^2$) & Percentage & Area($mm^2$) & Percentage & Area($mm^2$) & Percentage \\
ADC & 0.2736 & 0.45 & 1.6416 & 2.72 & 1.6416 & 3.26 \\
DAC & 25.3440 & 41.84 & 15.8400 & 26.26 & 15.8400 & 31.45 \\
MZM & -- & -- & 7.5942 & 12.59 & 7.5942 & 15.08 \\
TIA & 0.0048 & 0.01 & 0.0576 & 0.10 & 0.0576 & 0.11 \\
adder & 0.0512 & 0.08 & 0.0512 & 0.08 & 0.0512 & 0.10 \\
laser & 0.4800 & 0.79 & 0.7200 & 1.19 & 0.7200 & 1.43 \\
mem & 14.9022 & 24.60 & 14.6954 & 24.36 & 14.6954 & 29.18 \\
micro\_comb& -- & -- & 8.4111 & 13.94 & 8.4111 & 16.70 \\
photonic\_core& 20.0114 & 33.04 & 11.3183 & 18.76 & \fbox{1.3548}& 2.69 \\
Total& 60.5679 & 1.00 & 60.3294 & 1.00 & \fbox{50.3659}& 1.00 \\
\bottomrule
\end{tabular}

\end{table*}

\begin{table*} [!htb]
    \centering
    \caption{Reduction in power consumption when replacing existing dot product engine with inverse-designed optical cavity of dimensions (2 $\mu m$, 0.22 $\mu m$ , 2 $\mu m$)}
    \label{tab:power_sota_vs_our_model}

\begin{tabular}{rrrrr}
\toprule
\multicolumn{1}{r|}{Component} & \multicolumn{2}{r|}{Lightening Transformer} & \multicolumn{2}{r}{Optical cavity-based photonic core} \\
\hline
& Power($mW$) & Percentage & Power($mW$) & Percentage \\
ADC & 2131.2000 & 14.45 & 2131.2000 & 14.63 \\
DAC & 3214.2857 & 21.79 & 3214.2857 & 22.06 \\
MZM & 4032.0000 & 27.33 & 4032.0000 & 27.67 \\
Photodetector & 2534.4000 & 17.18 & 2534.4000 & 17.39 \\
TIA & 1728.0000 & 11.71 & 1728.0000 & 11.86 \\
adder & 26.2415 & 0.18 & 26.2415 & 0.18 \\
laser & 770.0917 & 5.22 & \fbox{589.5796}& 4.05 \\
mem & 316.3920 & 2.14 & 316.3920 & 2.17 \\
total & 14752.6109 & 1.00 & \fbox{14572.0988}& 1.00 \\
\bottomrule
\end{tabular}

\end{table*}

\begin{table*} [!htb]
    \centering
    \small
\setlength\tabcolsep{5pt}
    \caption{Reduction in energy consumption when replacing existing dot product engine with inverse-designed optical cavity of dimensions (2 $\mu m$, 0.22 $\mu m$ , 2 $\mu m$)}
    \label{tab:energy_sota_vs_our_model}

\begin{tabular}{rrrrrrr}
\toprule
\multicolumn{1}{r|}{Component} & \multicolumn{2}{r|}{MZI-based accelerator} & \multicolumn{2}{r|}{Lightening Transformer} & \multicolumn{2}{r}{Optical cavity-based photonic core} \\
\cline{2-7}
& Energy($mJ$) & Latency($ms$) & Energy($mJ$) & Latency($ms$) & Energy($mJ$) & Latency($ms$) \\
\midrule
FFN1 & 11.7423 & 50.12 & 1.7429 & 0.08 & 1.7279 & 0.08 \\
FFN2 & 11.7400 & 50.12 & 1.7360 & 0.08 & 1.7209 & 0.08 \\
attn & -- & -- & 0.1707 & 0.01 & 0.1685 & 0.01 \\
embed & 0.2434 & 1.04 & 0.0362 & 0.00 & 0.0359 & 0.00 \\
head & 0.0154 & 1.34 & 0.0140 & 0.00 & 0.0139 & 0.00 \\
others & 0.0021 & 0.00 & 0.0021 & 0.00 & 0.0021 & 0.00 \\
proj & 2.9356 & 12.53 & 0.4357 & 0.02 & 0.4320 & 0.02 \\
qkv & 8.8067 & 37.59 & 1.3072 & 0.06 & 1.2959 & 0.06 \\
total & 44.2923 & 190.34 & 5.4449 & 0.27 & \fbox{5.3970} & 0.27 \\
\bottomrule
\end{tabular}
\end{table*}

\section{Conclusion and Future Prospects}

The work explores the effect of miniaturisation of the photonic cores by leveraging optical cavities. 
Lightening Transformer, which has outperformed the prior accelerators in terms of area, energy, and latency, 
reduced the area of the photonic core by $43.4\%$, but increased the area occupied by lasers by $50\%$.  
Furthermore, it also introduced micro comb to serve as WDMs for parallel data streams, which constitute around 
$8.4\%$ of the total area, thus LT could reduce the total area only by $0.4\%.$ Enabling dot product operations 
in their photonic core through the optical cavity significantly miniaturises the former by $88\%$; therefore, when 
compared to MZI-based photonic accelerators, a core made of optical cavity in the LT photonic architecture can achieve 
reductions in overall area by $16.84\%$. As anticipated, miniaturisation of the area also reduces power and energy 
consumption by $1.22\%$ and $0.88\%$ respectively. 

Optical cavities warrant the need to devise schemes for representing data. The work encodes data in terms of the source 
current's amplitude and direction. The encoding requires data in range [-1,1]. In theory, the cavity response, governed 
by Equation \ref{eqn:TE_mode_excitation_coefficient}, succeeds in generating photocurrents that are proportional 
to the target product of the input $x$ and $y$, which is $xy$. The simulation results simulate the actual response of the cavity 
and report a response that matches its theoretical counterpart. 
Although the study did not examine discrete cavity optimization, the findings demonstrate that cavity-based photonic cores, 
augmented by microcomb integration, represent a bright direction for scalable, low-power, high-density photonic computation.

\bibliographystyle{plain}
\bibliography{refs}

\begin{thebibliography}{10}

\bibitem{mit_ocw_newtons_method}
Dimitri~P. Bertsekas.
\newblock Lecture 3: Newton's method.
\newblock \url{https://ocw.mit.edu/courses/15-084j-nonlinear-programming-spring-2004/83159d56de04a7e7dc94d0348fa4ccda_lec3_newton_mthd.pdf}, 2004.
\newblock Accessed: 2025-04-15.

\bibitem{adjmeth}
Andrew~M. Bradley.
\newblock Pde-constrained optimization and the adjoint method, 2024.

\bibitem{camacho2021single}
Miguel Camacho, Brian Edwards, and Nader Engheta.
\newblock A single inverse-designed photonic structure that performs parallel computing.
\newblock {\em Nature Communications}, 12(1):1466, March 2021.

\bibitem{Chew2016}
Weng~Cho Chew.
\newblock Theory of microwave and optical waveguides, 2016.
\newblock Lecture notes, updated February 15, 2016.

\bibitem{estakhri2019inverse}
Nasim~Mohammadi Estakhri, Brian Edwards, and Nader Engheta.
\newblock Inverse-designed metastructures that solve equations.
\newblock {\em Science}, 363(6433):1333--1338, March 2019.

\bibitem{giles2000introduction}
Michael~B Giles and Niles~A Pierce.
\newblock An introduction to the adjoint approach to design.
\newblock {\em Flow, turbulence and combustion}, 65:393--415, 2000.

\bibitem{2021Fourier}
{Hong Kong University of Science and Technology}.
\newblock Fourier {Series}, nov 18 2021.
\newblock [Online; accessed 2025-03-18].

\bibitem{photonic_crystal}
John~D. Joannopoulos, Steven~G. Johnson, Joshua~N. Winn, and Robert~D. Meade.
\newblock {\em Photonic Crystals: Molding the Flow of Light}.
\newblock Princeton University Press, USA, 2nd edition, 2008.

\bibitem{liu2019}
Yingjie Liu, Ke~Xu, Shuai Wang, Weihong Shen, Hucheng Xie, Yujie Wang, Shumin Xiao, Yong Yao, Jiangbing Du, Zuyuan He, and Qinghai Song.
\newblock Arbitrarily routed mode-division multiplexed photonic circuits for dense integration.
\newblock {\em Nature Communications}, 10(1):3263, 2019.

\bibitem{vuck_invdes}
S.~Molesky, Z.~Lin, A.Y. Piggott, et~al.
\newblock Inverse design in nanophotonics.
\newblock {\em Nature Photon}, 12(11):659--670, 2018.

\bibitem{linesearch}
Jorge Nocedal and Stephen~J. Wright.
\newblock {\em Line Search Methods}, pages 30--65.
\newblock Springer New York, New York, NY, 2006.

\bibitem{nocedal2006numerical}
Jorge Nocedal and Stephen~J. Wright.
\newblock {\em Numerical Optimization}.
\newblock Springer Series in Operations Research and Financial Engineering. Springer New York, NY, New York, NY, 2 edition, 2006.
\newblock eBook ISBN: 978-0-387-40065-5; Softcover ISBN: 978-1-4939-3711-0; Series ISSN: 1431-8598; Series E-ISSN: 2197-1773.

\bibitem{quasi_newton}
Jorge Nocedal and Stephen~J. Wright.
\newblock {\em Quasi-Newton Methods}, pages 135--163.
\newblock Springer New York, New York, NY, 2006.

\bibitem{doi:https://doi.org/10.1002/0471213748.ch5}
Bahaa E.~A. Saleh and Malvin~Carl Teich.
\newblock {\em Electromagnetic Optics}, chapter~5, pages 157--192.
\newblock John Wiley \& Sons, Ltd, Hoboken, NJ, USA, 1991.

\bibitem{doi:https://doi.org/10.1002/0471213748.ch2}
Bahaa E.~A. Saleh and Malvin~Carl Teich.
\newblock {\em Wave Optics}, chapter~2, pages 41--79.
\newblock John Wiley \& Sons, Ltd, Hoboken, NJ, USA, 1991.

\bibitem{Sapra_2020}
Neil~V. Sapra, Ki~Youl Yang, Dries Vercruysse, Kenneth~J. Leedle, Dylan~S. Black, R.~Joel England, Logan Su, Rahul Trivedi, Yu~Miao, Olav Solgaard, Robert~L. Byer, and Jelena Vučković.
\newblock On-chip integrated laser-driven particle accelerator.
\newblock {\em Science}, 367(6473):79–83, January 2020.

\bibitem{cs229_cvxopt_convex}
Stanford CS229~Course Staff.
\newblock Convex optimization with cvxopt.
\newblock \url{https://cs229.stanford.edu/section/cs229-cvxopt.pdf}, 2013.
\newblock Accessed: 2025-04-15.

\bibitem{vuckovic_spins_arch}
Logan Su, Dries Vercruysse, Jinhie Skarda, Neil~V. Sapra, Jan~A. Petykiewicz, and Jelena Vučković.
\newblock Nanophotonic inverse design with spins: Software architecture and practical considerations.
\newblock {\em Applied Physics Reviews}, 7(1):011407, 03 2020.

\bibitem{swiechowski2022monte}
Maciej {\'S}wiechowski, Konrad Godlewski, Bartosz Sawicki, and Jacek Ma{\'n}dziuk.
\newblock Monte carlo tree search: A review of recent modifications and applications, 2022.
\newblock arXiv:2103.04931 [cs.AI].

\bibitem{2023Maxwell}
{Virginia Polytechnic Institute and State University}.
\newblock The {Maxwell}-{Faraday} {Equation}, jan 25 2023.
\newblock [Online; accessed 2025-01-30].

\bibitem{wang2021convexity}
Zhanyu Wang.
\newblock Convexity, strong convexity, and smoothness in optimization.
\newblock \url{https://www.stat.purdue.edu/~wang4094/resources/slides/2021_spring_DL_meeting_01_opt_basics.pdf}, February 2021.
\newblock Accessed: 2025-04-15.

\bibitem{lbfgs}
Ciyou Zhu, Richard~H. Byrd, Peihuang Lu, and Jorge Nocedal.
\newblock Algorithm 778: L-bfgs-b: Fortran subroutines for large-scale bound-constrained optimization.
\newblock {\em ACM Trans. Math. Softw.}, 23(4):550–560, December 1997.

\bibitem{zhu2023lighteningtransformerdynamicallyoperatedopticallyinterconnectedphotonic}
Hanqing Zhu, Jiaqi Gu, Hanrui Wang, Zixuan Jiang, Zhekai Zhang, Rongxing Tang, Chenghao Feng, Song Han, Ray~T. Chen, and David~Z. Pan.
\newblock Lightening-transformer: A dynamically-operated optically-interconnected photonic transformer accelerator, 2023.

\bibitem{MaterialBoundaries}
ETH Zürich.
\newblock Material boundaries, n.d.
\newblock Lecture notes, ETH Zürich.

\end{thebibliography}

\end{document}